\newtheorem{prethm}{{\bf Theorem}}
\newenvironment{thm}{\begin{prethm}{\hspace{-0.5
               em}{\bf.}}}{\end{prethm}}
\newtheorem{pretheorem}{{\bf Theorem}}
\newtheorem{prepro}{Proposition}
\newenvironment{pro}{\begin{prepro}{\hspace{-0.5
               em}{\bf.}}}{\end{prepro}}
\newtheorem{prepr}{{\bf Theorem}}
\newtheorem{predefinition}{Definition}
\newenvironment{definition}{\begin{predefinition}{\hspace{-0.5
               em}{\bf.}}}{\end{predefinition}}
\newtheorem{prelem}{Lemma}
\newenvironment{lem}{\begin{prelem}{\hspace{-0.5
               em}{\bf.}}}{\end{prelem}}
\newtheorem{precor}{Corollary}
\newenvironment{cor}{\begin{precor}{\hspace{-0.5
               em}{\bf.}}}{\end{precor}}
\newtheorem{preexam}{Example}
\newenvironment{exam}{\begin{preexam}{\hspace{-0.5
               em}{\bf.}}}{\end{preexam}}
\newtheorem{preremark}{Remark}
\newenvironment{remark}{\begin{preremark}{\hspace{-0.5
               em}{\bf.}}}{\end{preremark}}
\newtheorem{preexample}{{\bf Example}}
\newtheorem{preproof}{{\bf Proof.}}
\newenvironment{proof}[1]{\begin{preproof}{\rm
               #1}\hfill{$\Box$}}{\end{preproof}}
\newcommand{\rk}{{\rm rank}\,}
\title{\bf\ Generalized Gapped-kmer Filters for Robust Frequency Estimation}
\author{{\normalsize  {  M. Mohammad-Noori${}^{ \textrm{a,c,*}}$},  { N. Ghareghani${}^{ \textrm{b,c}}$}, { M. Ghandi${}^{ \textrm{d,}}$\thanks{Corresponding authors}}\,
}\vspace{2mm} \\{\footnotesize{$^{ \textrm{a}}$\it
School of
Mathematics, Statistics and Computer
Science, College of Science, University of Tehran}}\vspace{-2mm}\\{\footnotesize{\it P.O. Box 14155-6455, Tehran,
Iran}}\\
{\footnotesize{$^{ \textrm{b}}$\it Department of Engineering Science, College of Engineering, University of Tehran,}}\vspace{-2mm}\\
{\footnotesize{\em P.O. Box 11165-4563, Tehran, Iran }}\\
{\footnotesize{$^{ \textrm{c}}$\it School of Mathematics, Institute for Research in
Fundamental Sciences {\rm(IPM),}}}\vspace{-2mm}\\{\footnotesize{\it P.O.Box: 19395-5746, Tehran,
Iran}}\\
{\footnotesize{$^{ \textrm{d}}$\it Broad Institute of MIT and Harvard
7 Cambridge Center, 4034C, }}\vspace{-2mm}\\{\footnotesize{\it Cambridge, MA 02142, United States of America}}\vspace{-2mm}\\
\\{\footnotesize Emails: ghareghani@ipm.ir, ghareghani@ut.ac.ir},\vspace{-2mm}\\
  {\footnotesize  morteza@ipm.ir, mmnoori@ut.ac.ir},\vspace{-2mm}\\
  {\footnotesize  mghandi@gmail.com}
 }
\date{}
\begin{document}
\maketitle


\begin{abstract}
In this paper, we study the generalized gapped k-mer filters and derive a closed form solution for their coefficients. We consider nonnegative integers $\ell$ and $k$, with $k\leq \ell$, and an $\ell$-tuple $B=(b_1,\ldots,b_{\ell})$ of integers $b_i\geq 2$, $i=1,\ldots,\ell$. We introduce and study an incidence matrix $A=A_{\ell,k;B}$. We develop a M\"obius-like function $\nu_B$ which helps us to obtain closed forms for a complete set of mutually orthogonal eigenvectors of $A^{\top} A$ as well as a complete set of mutually orthogonal eigenvectors of $AA^{\top}$ corresponding to nonzero eigenvalues.
 The reduced singular value decomposition of $A$ and combinatorial interpretations for the nullity and rank of $A$, are among the consequences of this approach.
 We then combine the obtained formulas, some results from linear algebra, and combinatorial identities of elementary symmetric functions and $\nu_B$, to provide the entries of the Moore-Penrose pseudo-inverse matrix $A^{+}$ and the Gapped k-mer filter matrix $A^{+} A$.
\end{abstract}

\section{Introduction}

Sequences of length $k$, commonly referred to as $k$-mers, are used in many computational biology algorithms. We previously showed that robust frequency estimation of $k$-mers using gapped $k$-mer features could profoundly improve the performance of algorithms used for sequence classification in computational biology \cite{kmer-b,Enhanced kmer-b}. The method described in these previous publications was based on analytically deriving the coefficients of a gapped $k$-mer filter that could be used to find the robust frequency estimates of $k$-mers. Although this filter could be applied to datasets consisting of DNA or Protein sequences, it was not applicable to complex datasets that included sequences defined on more heterogeneous feature spaces. Here, we provide the closed-form solution for a generalized gapped $k$-mer filter matrix, by relaxing the constraint that all the features are defined on a fixed-size alphabet.

In order to introduce the main object of this introduction, we briefly mention few definitions and notations here; These are presented in more extent and details in the body of the paper. Given two integers $\ell$ and $k$ with $0\leq k \leq \ell$ and a sequence $B=(b_1,\ldots,b_{\ell})$ of integers $b_i\geq 2$, $i=1,\ldots, \ell$, we associate to them two sets of sequences, $U_{\ell; B}$ and $V_{\ell,k; B}$, a match relation between the elements of these two sets and a corresponding $(0,1)$ matrix $A_{\ell,k;B}$ as below. The set $U_{\ell; B}$ consists of all sequences $x_1 \cdots x_{\ell}$ of integers $x_i$ satisfying $0\leq x_i<b_i$ for $i=1,\ldots,\ell$. The set $V_{\ell,k; B}$ consists of all sequences $y_1 \cdots y_{\ell}$, where each $y_i$ is either an integer satisfying $0\leq y_i<b_i$ or an additional gap symbol denoted as $g$; Furthermore, there are exactly $\ell-k$ occurrences of the gap symbol in any $y_1 \cdots y_{\ell} \in V_{\ell,k; B}$. Two sequences $x_1 \cdots x_{\ell} \in U_{\ell;B}$ and $y_1 \cdots y_{\ell} \in V_{\ell,k; B}$ are then matchable if for any $i$, $1 \leq i \leq \ell$, we have $y_i=x_i$ or $y_i=g$. In other words, the gap symbol $g$ acts as a wildcard and can match to any symbol. The corresponding $(0,1)$ matrix $A_{\ell,k ;B}$ is then obtained by indexing its columns and rows respectively by the elements of $U_{\ell;B}$ and $V_{\ell,k;B}$ and setting $A_{\ell,k;B}(v,u)=1$ if and only if $u$ and $v$ are matchable.

When $b_1=\cdots=b_{\ell}=b$, we have a fixed $b$-letter alphabet $\Sigma_b$ and we use the name $A_{\ell,k;b}$ instead of $A_{\ell,k;B}$. In computational biology for DNA sequences, we have $b=4$,
and $\Sigma_4=\{\texttt{A,C,G,T}\}$ is the set of four DNA bases. Then the set of column and row indexes have special names: The set of column indexes, $\Sigma_4^{\ell}$ is non-gapped oligomers of length $\ell$, briefly called non-gapped $\ell$-mers and the set of row indexes is gapped oligomers with $k$ non-gapped positions and length $\ell$, briefly called gapped $k$-mers (of length $\ell$). For amino acid sequences,  $b=20$, $\Sigma_{20}$ is the set of the 20 amino acids, and the column indexes and row indexes are the ungapped and gapped polypeptide sequences of length $\ell$. Apart from some previous studies of $A_{\ell,k;b}$ in mathematics (see \cite{Delsarte,Terwilliger,Delsarte association}), this matrix has recently found profound applications in the field of computational biology and machine learning \cite{kmer-b,Enhanced kmer-b}. Specifically, the inherent symmetry in matrix  $A_{\ell,k;b}$  allowed finding simple closed-form solutions for two related matrices: $W_{\ell,k;b}$ and $H_{\ell,k;b}$, where $W_{\ell,k;b}=A_{\ell,k;b}^{+}$ is the Moore-Penrose pseudo-inverse of $A_{\ell,k;b}$, and $H_{\ell,k;b}$ is the idempotent matrix given by $H_{\ell,k;b}=W_{\ell,k;b}A_{\ell,k;b}$.  In \cite{kmer-b} the matrix $W_{\ell,k;b}$ was derived and used to find robust estimates for $\ell$-mer counts; This led to significant improvement to predict the binding of certain transcription factors to DNA sequences. This work was then extended in \cite{Enhanced kmer-b} and the matrix $H_{\ell,k;b}$ was used to develop a method to efficiently compute the $\ell$-mer count estimates and to compute a string kernel based on these robust count estimates to identify enhancer sequences. Beyond modeling enhancer sequences in mammalian genomes, this method has been widely applied to several problems in computational biology including prediction of the effect of non-coding variants \cite{deltaSVM}, identification of local sequence features influencing cis-regulatory activity \cite{Chaudhari2018}, identification of accessible chromatin regions \cite{Nathans}, and estimation of evolutionary distances for phylogeny reconstruction \cite{Morgenstern}.

In all the above applications, the features were defined over a fixed alphabet length ($b=4$ for DNA/RNA and $b=20$ for amino acids). Here, we show that this constraint could be relaxed to allow generalizing this method to cases with mixture of features that are defined over alphabets of different sizes.  For example, in addition to the DNA sequence that is defined over the alphabet \{\texttt{A,C,G,T}\}, one can also add DNA methylation status which is defined over \{\textit{methylated}, \textit{unmethylated}\} alphabet or other discrete features. Then a similar methodology described in \cite{kmer-b} and \cite{Enhanced kmer-b} can be applied to find a robust estimate of the joint distribution of the features using a limited training data. To achieve this, we take a similar approach as was used in \cite{kmer-b}. We introduce a M\"obius-like function $\nu_B$ and use the related identities to obtain eigenvalues of $A_{\ell,k;B}A_{\ell,k;B}^{\top}$ in terms of elementary symmetric functions.Then we provide a complete set of mutually orthogonal eigenvectors of $A_{\ell,k;B}A_{\ell,k;B}^{\top}$ as well as a complete set of mutually orthogonal eigenvectors of $A_{\ell,k;B}A_{\ell,k;B}^{\top}$ corresponding to the nonzero eigenvalues. This gives the reduced SVD (reduced singular value decomposition) of $A_{\ell,k;B}$. We also give a combinatorial interpretations for the nullity and rank of $A_{\ell,k;B}$ via finding concrete bases for the null space and row space of this matrix. Finally, we derive an equation for the entries of matrices $W_{\ell,k;B}$ and $H_{\ell,k;B}$, where $W_{\ell,k;B}=A_{\ell,k;B}^{+}$ is the Moore-Penrose pseudo-inverse of $A_{\ell,k;B}$ and $H_{\ell,k;B}=W_{\ell,k;B}A_{\ell,k;B}$. Deriving an explicit formula for matrices $A_{\ell,k;B}$ and $H_{\ell,k;B}$ allows efficient computation of robust count estimates from a given training data. In practice, even with modest values of $\ell$ and $k$, these matrices have exponentially large dimensions which makes the application of numeric methods unfeasible.

The rest of the paper is organized as following: Introduction of notation and preliminaries is given in Section \ref{notation}: General notations and definitions for sets, strings, sequences, relations and some symmetric polynomials are presented in Section \ref{notationSet}; Some preliminaries from linear algebra are discussed in Section \ref{PreLinAlg}. The function $\nu_B$ and some of its properties is defined and studied in Section \ref{nu sec}; The main results of this section, that is the identities given in Propositions \ref{mu.mu.sum}, \ref{someNuId} and \ref{NuIdentity}, are used in later sections. Using the definition of function $\nu_B$ and also the elementary symmetric polynomials, we propose an orthonormal basis for the eigenspaces of the matrix $A_{\ell,k;B} A_{\ell,k;B}^{\top}$ in Section \ref{eigen sec}. Concrete bases for the null space and the row space of $A_{\ell,k;B}$ are presented in Section \ref{rowspace}. Finally, in Section \ref{w sec} we compute the entries of $W_{\ell,k;B}$ and $H_{\ell,k;B}$.

\section{Notation and Preliminaries}\label{notation}

\subsection{Notation for sets, strings, sequences and relations}\label{notationSet}
\begin{definition} Let $\ell$ be a positive integer. The set $[\ell]$ is defined as $[\ell]=\{1,\ldots,\ell\}$. For a set $X$ and a nonnegative integer $n$, by ${X \choose n}$, we mean the set of all $n$-element subsets of $X$. Thus $|{X \choose n}|={|X| \choose n}$ and $|{[\ell] \choose n}|={\ell \choose n}$.
\end{definition}

\begin{definition}
 A word $x$ on a finite alphabet $\Sigma$, is a sequence $x=x_1\cdots x_{\ell}$ whose elements $x_i$ belong to the set $\Sigma$.
As in {\rm \cite{kmer-b}} for a given integer $b\geq 2$, the sets $\Sigma_b$, $\Delta_b$, $\Gamma_b$ are defined as follows
$$        \Sigma_b = \{0,1,\cdots,b-1\}, \,\,\,\,
		\Delta_b = \Sigma_b \cup \{g\}, \,\,\,\,
		\Gamma_b = \Delta_b \setminus \{0\},
	$$
where $g$ stands for the gap symbol.
\end{definition}

\begin{definition}
Let $B=(b_1,b_2,\ldots,b_{\ell})$ be an $\ell$-tuple of integers $b_i\geq 2$. Define the sets $\Sigma_B$, $\Delta_B$, $\Gamma_B$, $U_{\ell; B}$ and $V_{\ell,k;B}$ as follows
\begin{align*}
        \Sigma_B &= \Sigma_{b_1}\times \cdots \times \Sigma_{b_{\ell}}, \,\,
        \Delta_B = \Delta_{b_1}\times \cdots \times \Delta_{b_{\ell}},\\
        \Gamma_B &= \Gamma_{b_1}\times \cdots \times \Gamma_{b_{\ell}}, \,\,
        U_{\ell;B} = \Sigma_B,\\
        V_{\ell,k;B} &= \{v\in \Delta_B : |v|_g=\ell-k\}, \,\,
        V'_{\ell,k;B}=\{w\in \Gamma_B: |w|_g=\ell-k\},\\
        V_{\ell, \leq k;B}&=\bigcup_{m=0}^k V_{\ell m}, \,\,
        V'_{\ell, \leq k;B}=\bigcup_{m=0}^k V'_{\ell m}
\end{align*}
\end{definition}
A {\it weak partial order} on a set $S$ is a binary relation $\preceq$ on $S$ which is reflexive, transitive and antisymmetric. A set equipped with a weak partial order is called a {\it partially ordered set} or briefly a {\it poset}. If $a \preceq b$ and $a\neq b$  we write $a\prec b$; Then $\prec$ is nonreflexive, transitive and nonsymmetric; Such a relation is called a {\it strong partial order} on $S$. If $\preceq_1$ (resp. $\preceq_2$ ) is a weak partial order on $S$ (resp. $T$), then $\preceq_1 \times \preceq_2$ is a weak partial order on $S\times T$.
 If for any $a$ and $b$ in $S$, either $a\prec b$ or $b\prec a$, then the partial order is called a {\it total order}, or a {linear order}.
 This notation is used in the following definition.

\begin{definition}
  Let $B=(b_1, b_2, \ldots, b_{\ell})$. We define a partial order on the set $\Delta_B$. For this purpose, firstly for any $1\leq i\leq \ell$, we consider the order $\prec_i$ on the set $\Delta_{b_i}$ given by
$$0\prec_i 1\prec_i \ldots \prec_i b_i-1 \prec_i g $$
and consider the order $\preceq_B\, :=\, (\preceq_1\times \ldots \times \preceq_\ell)$ on $\Delta_B$.
\end{definition}

\begin{remark} As it is clear from the definitions of $U_{\ell;B}$ and  $V_{\ell,k;B}$, when we use these notations we specially emphasize on parameters $\ell$ and $k$.
\end{remark}

 \begin{definition} For any word $v \in \Delta_B$ we set $G_v=\{i: 1\leq i \leq \ell, v_i=g\}$ and $\overline{G}_v = [\ell]\setminus G_v$. If $X=\{x_1,\cdots,x_n\}$ is a subset of $\{1,\cdots,\ell\}$ with $x_1<x_2<\ldots<x_n$ then by $B(X)$ we mean $(b_{x_1},\ldots,b_{x_n})$. Especially, if $v\in \Delta_B$ and  $v'\in \Gamma_B$, then
 $B(G_{v})=(b_i)_{i \in G_{v}}$ and $B(G_{v'})=(b_i)_{i \in G_{v'}}$.
 \end{definition}

 \begin{definition}
    Let $B=(b_1,\ldots,b_{\ell})$. We say elements $u\in \Sigma_B$ and $v\in \Delta_{B}$ match (or $u$ and $v$ are matchable) if for any $1\leq i\leq \ell$ with $v_i\neq g$ we have $u_i=v_i$; We denote this by $v\sim u$. The set of the elements $v\in V_{\ell, k;B}$ which are matchable with $u\in \Sigma_B$, is denoted by $M_{\ell, k;B}(u)$.
    The set of elements $u\in \Sigma_B$ which are matchable with $v$, is denoted by $N_{\ell, k;B}(v)$.
\end{definition}

\begin{definition}\label{defA} The matrix $A_{\ell, k;B}$ is defined as a $(0,1)$ matrix whose rows and columns are indexed respectively by the elements of
$ V_{\ell, k;B}$ and $\Sigma_B$ and $A_{\ell, k;B}(v,u)=1$ if and only if $u$ and $v$ are matchable.
\end{definition}

\begin{remark}
Considering the definition \ref{defA}, if we identify each row index  $v \in V_{\ell, k;B}$ with  $N_{\ell, k;B}(v)$,
then the matrix  $A_{\ell, k;B}$ is seen as an incidence matrix, in which the points and blocks are row indexes and column indexes,
respectively.
\end{remark}

\begin{definition}\label{defALq} The matrix $A_{\ell, \leq k;B}$ is defined as the $(0,1)$ matrix obtained by stacking the matrices $A_{\ell,i;B}$ ($i=0,\ldots, k$), one on top of the other; Thus the rows and columns of $A_{\ell, \leq k;B}$ are indexed respectively by the elements of $V_{\ell,\leq k;B}$ and $U_{\ell,B}$.
\end{definition}

Elementary symmetric polynomials are well-studied objects in the study of polynomials ring $k[x_1,x_2,\ldots,x_n]$ (see Chapter $7$ of \cite{cox}). Below we formally mention their definitions; Then we define another symmetric polynomial which is useful in our work. This is followed by an example demonstrating their applications in our work.

\begin{definition} Let $i$ and $n$ be nonnegative integers and let $X=(x_1,x_2,\ldots,x_n)$ be a finite sequence of  variables. The $i$-th elementary symmetric polynomial, denoted as $S_i(X)$, is defined as $S_i(X):=\sum_{I\in {X \choose i}} \prod_{i\in I} x_i$.
\end{definition}

\noindent {\bf{Notation.}} Let $X=(x_1,\ldots,x_n)$ be a finite sequence of numbers and $\alpha$ and $\beta$ be arbitrary numbers. Then we show the sequence $(\beta x_1+\alpha,\ldots,\beta x_n+\alpha)$ by $\beta X+\alpha$.

\begin{definition} Let $i$ and $n$ be nonnegative integers and let $X=(x_1,x_2,\ldots,x_n)$ be a finite sequence of  variables. The expression $R_i(X)$ is then defined as follows:
\begin{equation} \label{Ri-I}
R_i(X)=\sum_{j=0}^i S_j(X-1)
\end{equation}
\end{definition}

\begin{exam}
Let $0\leq k\leq \ell$ be integers and $B=(b_1,\ldots,b_{\ell})$, $u\in \Sigma_B$ and $v\in V_{\ell k;B}$.
Then we have
\begin{align*}
        &|\Sigma_B| =|\Gamma_B|=\prod_{i=1}^{\ell} b_i, \,\,\,  &|V_{\ell,k;B}|= S_k(B), \,\,\,\          &|V'_{\ell,k;B}|=S_k(B-1), \,\,\,&|V_{\ell,\leq k;B}|= R_k(B+1),\\
        &|V'_{\ell,\leq k;B}|=R_k(B), \,\,\,  &|M_{\ell,k}(u)|={\ell \choose k}, \,\,\,
        &|N_{\ell,k}(v)|=\prod_{i\in G_v} b_i, \,\,\,
\end{align*}

\end{exam}

\subsection{Notation and preliminaries from Linear Algebra}\label{PreLinAlg}
    All matrices we concern in this paper are real matrices. The row space of a $A$ is denoted as ${\rm row}(A)$, the column space of $A$ is denoted as ${\rm col}(A)$, and the dimension of the row space
    of $A$ is denoted as ${\rm rank}(A)$.
    The kernel of $A$, denoted as ${\ker}(A)$ and the nullity of $A$ and denoted as ${\rm null} (A)$.
      The matrix $A$ is called diagonalizable in the field of real numbers if there exists a nonsingular real matrix $P$
     such that $A=P\Lambda_0 P^{-1}$ for some diagonal real matrix $\Lambda_0$.
   If $A$ is diagonalizable, then all eigenvalues of $A$ appear on the main diagonal of  $\Lambda_0$ and the columns of $P$ are the corresponding  eigenvectors.
     The set of column vectors of $P$ is called a {\it complete set of eigenvectors}
     of $A$; The set of column vectors of $P$ which correspond to nonzero eigenvalues is called a {\it complete set of nonzero eigenvectors}
     of $A$. If eigenvectors belonging to distinct eigenvalues of the matrix $A$ are mutually orthogonal, then there exists an
     eigendecomposition $A=P\Lambda_0 P^{-1}$ with $P^{-1}=P^{\top}$, we call such a decomposition an {\it orthogonal eigendecomposition}.
Let $A=P \Lambda_0 P^{\top}$ be a orthogonal eigendecomposition for the matrix $A$ and $P=[Q\,N]$ where ${\rm col}(N)={\ker} (A)$. Then $A=Q \Lambda Q^{\top}$, where the matrix $Q$  is obtained by deleting the columns of $P$ which are in ${\ker} (A)$, and
 $\Lambda$ is obtained by deleting the zero columns and zero rows of $\Lambda_0$; we call this decomposition an {\it orthonormal nonzero eigendecomposition}.

It is known that any symmetric real matrix $A$ is diagonalizable on the field of real numbers  and eigenvectors corresponding to distinct eigenvalues of $A$ are orthogonal. Hence, every symmetric real matrix $A$ has an orthonormal nonzero eigendecomposition of the form $A=Q \Lambda Q^{\top}$ with real matrices $\Lambda$ and $Q$. A real symmetric matrix $A$ of order $n$ is positive definite (resp. positive semi-definite) if ${\bf x}^{\top} A{\bf x}>0$ (resp. ${\bf x}^{\top} A{\bf x}\geq 0$) for all nonzero ${\bf x}\in \mathbb{R}^n$. For any matrix $A$, the matrix $A^{\top}A$ is positive semidefinite, and ${\rm rank}(A) = {\rm rank}(A A^{\top})$. Conversely, any positive semidefinite matrix $M$ can be written as $M = A^{\top}A$; this is the Cholesky decomposition. If $A$ is a real matrix, then both $A^{\top}A$ and $AA^{\top}$ are diagonalizable over the field of real numbers.

A {\it singular value decomposition} (SVD) of a matrix $A\in {\mathbb{R}}^{n\times m}$ is a factorization
$A= U \Sigma V^{\top}$ with $\Sigma = {\rm diag} (\sigma_1, \sigma_2, \ldots, \sigma_p)$, $p= \min\{n, m\}$ and $\sigma_1\geq \sigma_2 \geq \ldots \geq \sigma_p \geq 0,$ such that the set of columns of both matrices $U=[{\bf u_1}, {\bf u_2}, \ldots,  {\bf u_n}]\in {\mathbb{R}}^{n\times n}$ and
$V=[{\bf v_1}, {\bf v_2}, \ldots, {\bf v_m}]\in {\mathbb{R}}^{m\times m}$ are orthonormal. The diagonal entries of $\Sigma$ are called {\it singular values}
of $A$.
If ${\rm rank} (A)=r<p$, then the {\it reduced singular value decomposition} (reduced SVD) of $A$ is a factorization
$A= {\hat{U}} {\hat{\Sigma}} {\hat{V}}^{\top}$ with ${\hat{\Sigma}} = {\rm diag} (\sigma_1, \sigma_2, \ldots, \sigma_r)\in {\mathbb{R}}^{r\times r}$
and $\sigma_1\geq \sigma_2 \geq \ldots \geq \sigma_r >0$, such that the matrices $U=[{\bf u_1}, {\bf u_2}, \ldots,  {\bf u_r}]\in {\mathbb{R}}^{n\times r}$ and
$V=[{\bf v_1}, {\bf v_2}, \ldots, {\bf v_r}]\in {\mathbb{R}}^{m\times r}$ are both orthonormal. The following lemma gives the relation between the SVD of matrix $A$ and eigendecomposition of the matrices $AA^{\top}$ and $A^{\top}A$.
\begin{lem}{\rm (\cite{handbooklin}, Section 5.6, Facts 8,9)} \label{SVD AAT}
Let $A \in {\mathbb{R}}^{n\times m}$, then the following facts holds:
\begin{itemize}
\item[\rm (i)] The nonzero singular values of $A$ are the square roots of nonzero eigenvalues of $A^{\top}A$ or $AA^{\top}$.
\item[\rm (ii)] if ${{U}} {{\Sigma}} {{V}}^{\top}$ is a reduced SVD of $A$, then columns of ${V}$ are eigenvectors of
$A^{\top}A$ and columns of ${U}$ are eigenvectors of $AA^{\top}$.\\
\end{itemize}
\end{lem}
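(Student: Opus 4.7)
The plan is to derive both statements directly from the reduced SVD by exploiting orthonormality of the columns of $\hat{U}$ and $\hat{V}$. Writing $A = \hat{U}\hat{\Sigma}\hat{V}^{\top}$ with $\hat{U}\in\mathbb{R}^{n\times r}$ and $\hat{V}\in\mathbb{R}^{m\times r}$ having orthonormal columns, one has $\hat{U}^{\top}\hat{U} = I_{r}$ and $\hat{V}^{\top}\hat{V} = I_{r}$. A direct computation then yields
\[
A^{\top}A \;=\; \hat{V}\hat{\Sigma}\hat{U}^{\top}\hat{U}\hat{\Sigma}\hat{V}^{\top} \;=\; \hat{V}\hat{\Sigma}^{2}\hat{V}^{\top},\qquad
AA^{\top} \;=\; \hat{U}\hat{\Sigma}\hat{V}^{\top}\hat{V}\hat{\Sigma}\hat{U}^{\top} \;=\; \hat{U}\hat{\Sigma}^{2}\hat{U}^{\top}.
\]

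Reading the first identity column by column, i.e.\ post-multiplying $A^{\top}A\,\hat{V} = \hat{V}\hat{\Sigma}^{2}$ out columnwise, gives $A^{\top}A\,{\bf v_i} = \sigma_i^{2}{\bf v_i}$, so each column of $\hat{V}$ is an eigenvector of $A^{\top}A$ with eigenvalue $\sigma_i^{2}>0$; symmetrically each column of $\hat{U}$ is an eigenvector of $AA^{\top}$ with the same eigenvalue. This already proves part (ii) and shows that every $\sigma_i^{2}$ appears among the nonzero eigenvalues of both $A^{\top}A$ and $AA^{\top}$, which is one direction of (i).

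To complete part (i), I need to verify that $\sigma_1^{2},\ldots,\sigma_r^{2}$ \emph{exhaust} all nonzero eigenvalues (with multiplicity). For this I would invoke the rank identity recalled earlier in Section \ref{PreLinAlg}, namely ${\rm rank}(A) = {\rm rank}(A^{\top}A) = {\rm rank}(AA^{\top})$, which equals $r$ by the definition of the reduced SVD. Since $\hat{\Sigma}^{2}$ is an $r\times r$ diagonal matrix with strictly positive entries and the columns of $\hat{V}$ are orthonormal (hence linearly independent), the decomposition $A^{\top}A = \hat{V}\hat{\Sigma}^{2}\hat{V}^{\top}$ already exhibits $r$ linearly independent eigenvectors with positive eigenvalues. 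Because $A^{\top}A$ is symmetric and positive semidefinite of rank $r$, its nonzero spectrum has total multiplicity exactly $r$, and no further nonzero eigenvalue can occur. The argument for $AA^{\top}$ is the same with the roles of $\hat{U}$ and $\hat{V}$ swapped.

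The whole argument is essentially mechanical once the relations $\hat{U}^{\top}\hat{U} = \hat{V}^{\top}\hat{V} = I_{r}$ are used. The only delicate step is the final completeness/counting argument, that is, ruling out the existence of nonzero eigenvalues of $A^{\top}A$ outside $\{\sigma_1^{2},\ldots,\sigma_r^{2}\}$. The rank equality already recorded in Section \ref{PreLinAlg} supplies exactly what is needed, so no genuinely new idea is required beyond carefully tracking dimensions.
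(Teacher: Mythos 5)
Your argument is correct. Note, however, that the paper gives no proof of this lemma at all: it is quoted verbatim from the handbook reference (\cite{handbooklin}, Section 5.6, Facts 8, 9), so there is nothing internal to compare against. What you have written is a valid self-contained justification: the identities $A^{\top}A=\hat{V}\hat{\Sigma}^{2}\hat{V}^{\top}$ and $AA^{\top}=\hat{U}\hat{\Sigma}^{2}\hat{U}^{\top}$ immediately give part (ii), and your counting step -- $r$ linearly independent eigenvectors with positive eigenvalues for a symmetric positive semidefinite matrix of rank $r$ -- correctly rules out any further nonzero eigenvalues, which is the one place a careless proof could leave a gap. The only cosmetic point worth flagging is that part (i) speaks of the nonzero singular values as defined via the full SVD, while you work with the reduced one; you should say in one line that the diagonal entries of $\hat{\Sigma}$ are precisely the positive diagonal entries of the full $\Sigma$ (this is how the paper sets up the reduced SVD in Section \ref{PreLinAlg}), after which your identification of $\{\sigma_1^2,\ldots,\sigma_r^2\}$ with the nonzero spectrum finishes the claim.
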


The {\it Moore-Penrose pseudo-inverse} of a matrix $A$, denoted by $A^{+}$, is defined as a matrix that satisfies all the following four conditions:
     \begin{equation*}
        AA^{+}A=A, \,\,\,\,\,  A^{+}AA^{+}=A^{+}, \,\,\,\,\, (AA^{+})^{^{\top}}=AA^{+}, \,\,\,\,\, (A^{+}A)^{^{\top}}=A^{+}A
    \end{equation*}

    The Moore-Penrose pseudo-inverse exists and is unique for any given matrix $A$. We have $ A^{+}= (A^{\top} A)^{+} A^{\top}= A^{\top}(AA^{\top} )^{+}$. For further properties of the Moore-Penrose pseudo-inverse see for instance \cite{handbooklin}.
 The two following Lemmas provide the  Moore-Penrose pseudo-inverse of the matrix $A$
based on some nonzero eigendecomposition of $AA^{\top}$. The proof of the first one is straight forward and left to the readers.

The following lemmas provide the Moore-Penrose pseudo-inverse of a matrix $A$ based on some nonzero eigendecomposition of the matrix $AA^{\top}$.

       \begin{lem}\label{positivesemidef}
       Let $B$ be a positive semi-definite real matrix. Then $B$ admits an orthonormal nonzero eigendecomposition of the form $B=Q \Lambda Q^{\top}$. Where
       $Q^{\top}Q=I$.
       Moreover let $B=AA^{\top}$, then we have
      $ A^{\top}QQ^{\top}=A^{\top}$.
       \end{lem}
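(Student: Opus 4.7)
The proof proposal splits naturally into the two assertions of the lemma. For the existence of an orthonormal nonzero eigendecomposition, I will appeal to the spectral theorem for real symmetric matrices, of which every positive semi-definite matrix is an instance. This produces an orthogonal $P$ and a diagonal $\Lambda_0$ with $B=P\Lambda_0 P^{\top}$, and positive semi-definiteness forces the diagonal entries of $\Lambda_0$ to be nonnegative. Following the definition of orthonormal nonzero eigendecomposition given just above the lemma, I would then delete the columns of $P$ paired with zero eigenvalues, obtaining $Q$ with orthonormal columns (so $Q^{\top}Q=I_r$, where $r=\mathrm{rank}(B)$) and a diagonal $\Lambda$ with strictly positive entries so that $B=Q\Lambda Q^{\top}$.

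The second assertion is the substantive part. The plan is to argue that $QQ^{\top}$ is the orthogonal projector onto $\mathrm{col}(B)$, and then that every column of $A$ already lies in $\mathrm{col}(B)$. For the first step, the columns of $Q$ are an orthonormal basis for the direct sum of the eigenspaces of $B$ corresponding to nonzero eigenvalues; since $B$ is symmetric, this direct sum is exactly $\ker(B)^{\perp}=\mathrm{col}(B)$. Combined with $Q^{\top}Q=I_r$, this identifies $QQ^{\top}$ with the orthogonal projection onto $\mathrm{col}(B)$.

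For the second step, I would invoke the standard identity $\mathrm{col}(AA^{\top})=\mathrm{col}(A)$. This follows quickly from $\ker(AA^{\top})=\ker(A^{\top})$: if $AA^{\top}x=0$ then $\|A^{\top}x\|^2=x^{\top}AA^{\top}x=0$, giving $A^{\top}x=0$, while the reverse inclusion is immediate. Taking orthogonal complements and using $\mathrm{col}(M)^{\perp}=\ker(M^{\top})$ yields the stated equality. Since $B=AA^{\top}$, we conclude $\mathrm{col}(B)=\mathrm{col}(A)$, so every column of $A$ is fixed by the projector $QQ^{\top}$, i.e.\ $QQ^{\top}A=A$. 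Transposing gives $A^{\top}QQ^{\top}=A^{\top}$, as required.

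There is no real obstacle here, as both steps are standard linear algebra; the only subtlety worth flagging is the contrast between $Q^{\top}Q=I_r$, which holds, and $QQ^{\top}=I$, which fails as soon as $B$ is singular. The content of the lemma is precisely that despite this failure, $QQ^{\top}$ still acts as the identity on $\mathrm{col}(A)$, which is why the identity can be dragged past $A^{\top}$ on the right.
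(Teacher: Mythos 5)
Your proof is correct and rests on the same key computation as the paper's: both arguments reduce to showing that $AA^{\top}y=0$ forces $A^{\top}y=0$ via $\|A^{\top}y\|^{2}=y^{\top}AA^{\top}y$. The paper packages the conclusion as $A^{\top}N=0$ combined with $QQ^{\top}+NN^{\top}=I$, while you phrase $QQ^{\top}$ as the orthogonal projector onto $\mathrm{col}(B)=\mathrm{col}(A)$; these are the same argument in slightly different language, and your closing remark about $Q^{\top}Q=I_r$ versus $QQ^{\top}\neq I$ correctly identifies the point of the lemma.
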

\begin{proof}
{Using the previous notation, let $AA^{\top}=P\Lambda_0 P^{\top}$ be an orthonormal decomposition for $AA^{\top}$ and $P=[Q\, N]$ where the columns of $N$ are in ${\ker} (AA^{\top})$. The equation $ Q^{\top}Q=I$ is concluded from the orthonormality of the columns of $Q$. If ${\bf y}$ denotes a column of $N$, by
        $A_{\ell k}A_{\ell k}^{\top}{\bf y}=0$ we obtain ${\bf y}^{\top}A_{\ell k}A_{\ell k}^{\top}{\bf y}=0$, hence $||A_{\ell k}^{\top}{\bf y}||=0$, which yields  $A_{\ell k}^{\top}{\bf y}=0$. Thus
       $A^{\top}N=0.$
         Now, from $PP^{\top}=I$ we obtain $QQ^{\top}+NN^{\top}=I$; Multiplying from left by $A^{\top}$ and using $A^{\top}N=0$, we provide  $ A^{\top}QQ^{\top}=A^{\top}$.
}
\end{proof}

\begin{lem}\label{W_nzed_A}
Let $A_{n \times m}$ be a real matrix and let $AA^{\top}=Q\Lambda Q^{\top}$ be a nonzero orthonormal eigendecomposition of $AA^{\top}$.
Then the Moore-Penrose pseudo-inverse of $A$ is given by $W=A^{\top}Q\Lambda^{-1}Q^{\top}$. Moreover, if the all one column vector ${\bf j}=[1\, 1\, \ldots 1]^{\top}$
is an eigenvector of $A^{\top}A$, then $WA{\bf j}={\bf j}$.
\end{lem}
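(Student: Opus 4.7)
The plan is to verify directly that $W = A^{\top} Q \Lambda^{-1} Q^{\top}$ satisfies the four defining conditions of the Moore-Penrose pseudo-inverse, relying entirely on the two identities supplied by Lemma \ref{positivesemidef}: $Q^{\top} Q = I$, and $A^{\top} Q Q^{\top} = A^{\top}$ (equivalently $Q Q^{\top} A = A$). Because $\Lambda$ collects only the nonzero eigenvalues of $AA^{\top}$, $\Lambda^{-1}$ is well-defined and diagonal.

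First I would compute $AW$ by telescoping with $Q^{\top}Q=I$:
\[
AW \;=\; A A^{\top} Q \Lambda^{-1} Q^{\top} \;=\; (Q \Lambda Q^{\top}) Q \Lambda^{-1} Q^{\top} \;=\; Q Q^{\top},
\]
which is manifestly symmetric, settling $(AW)^{\top}=AW$. The product $WA = A^{\top} Q \Lambda^{-1} Q^{\top} A$ is symmetric by inspection because $\Lambda^{-1}$ is diagonal. Then $AWA = (AW)A = QQ^{\top}A = A$ by the identity from Lemma \ref{positivesemidef}, and similarly
\[
WAW \;=\; W(AW) \;=\; W Q Q^{\top} \;=\; A^{\top} Q \Lambda^{-1} (Q^{\top}Q) Q^{\top} \;=\; A^{\top} Q \Lambda^{-1} Q^{\top} \;=\; W.
\]
All four Penrose conditions hold, so $W = A^{+}$.

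For the ``moreover'' assertion, suppose $A^{\top}A\,{\bf j} = \lambda\,{\bf j}$ with ${\bf j}\neq 0$. Necessarily $\lambda \neq 0$, for otherwise $\|A{\bf j}\|^{2} = {\bf j}^{\top} A^{\top} A {\bf j} = 0$ would force $A{\bf j}=0$, giving $WA{\bf j}=0\neq{\bf j}$. Then ${\bf j} = \lambda^{-1} A^{\top}(A{\bf j}) \in {\rm col}(A^{\top}) = {\rm row}(A)$. Since $WA = A^{+}A$ is the orthogonal projection onto ${\rm row}(A)$, a standard consequence of the four Penrose conditions (the last two force $WA$ to be the symmetric idempotent with range ${\rm col}(A^{\top})$), it fixes ${\bf j}$, yielding $WA\,{\bf j} = {\bf j}$.

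No genuine obstacle arises; the proof is essentially bookkeeping. The one substantive step---that $QQ^{\top}A = A$ in spite of ${\ker}(AA^{\top})$ being typically nontrivial---has already been isolated in Lemma \ref{positivesemidef}. The only mild subtlety in the ``moreover'' part is the tacit hypothesis that the eigenvalue attached to ${\bf j}$ be nonzero, which is best flagged explicitly in the argument.
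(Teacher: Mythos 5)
Your proof is correct and follows exactly the route the paper intends: the paper's own proof is a one-line deferral to Lemma \ref{positivesemidef}, and you have simply carried out the verification of the four Penrose conditions from the identities $Q^{\top}Q=I$ and $A^{\top}QQ^{\top}=A^{\top}$ that that lemma supplies, then used the standard fact that $A^{+}A$ projects onto ${\rm row}(A)$ for the final claim. Your observation that the ``moreover'' clause tacitly requires the eigenvalue attached to ${\bf j}$ to be nonzero is a fair and correct reading of a small imprecision in the statement (it is harmless in the paper's application, where the relevant eigenvalue is $S_{\ell-k}(B)>0$).
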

\begin{proof}
{The proof is easily obtained by using Lemma \ref{positivesemidef}.}
\end{proof}


\begin{lem}\label{W_up}
Let $A_{n \times m}$ be a real matrix and suppose that the columns of $\Upsilon$ are a complete set of eigenvectors corresponding to nonzero eigenvalues of $AA^{\top}$. Let the columns of $\Upsilon$ be $c_1,\ldots,c_n$ corresponding to the nonzero eigenvalues $\lambda_1,\ldots,\lambda_n$.
\begin{itemize}
        \item[\rm (i)]
An orthonormal nonzero eigendecomposition $AA^{\top}=Q\Lambda Q^{\top}$ is obtained by setting
 $Q=\Upsilon E$, where $E={\rm diag} (\frac{1}{\|c_i\|})_{1\leq i \leq n}$.
\item[\rm (ii)] If we denote Moore-Penrose pseudo-inverse of $A$
by $W$, then  $W=A^{\top}\Upsilon D \Upsilon^{\top}$, where $D={\rm diag}(\frac{1}{||c_i||^2 \lambda_i})_{1\leq i \leq n}$. Consequently, $W=A^{\top} C$ where the entries of $C$ are given by $C_{ij}=\sum_{k} \frac{\Upsilon_{ik} \Upsilon_{jk}}{||c_k||^2 \lambda_k}$.
\end{itemize}
\end{lem}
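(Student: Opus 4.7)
The plan is to normalize the given eigenvectors to obtain $Q$, verify the required eigendecomposition of $AA^{\top}$ by a direct matrix calculation, and then feed that decomposition into Lemma \ref{W_nzed_A} to read off $W$.

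For part (i), I would first observe that because $AA^{\top}$ is symmetric and positive semi-definite, eigenvectors from distinct eigenspaces are automatically orthogonal, and a complete set corresponding to the nonzero eigenvalues can be (and, by the convention in use in Section \ref{PreLinAlg}, is) taken mutually orthogonal within each eigenspace as well. Hence the columns $c_1,\ldots,c_n$ of $\Upsilon$ satisfy $\Upsilon^{\top}\Upsilon=\mathrm{diag}(\|c_i\|^2)$, and setting $Q=\Upsilon E$ with $E=\mathrm{diag}(1/\|c_i\|)$ yields $Q^{\top}Q = E\,\Upsilon^{\top}\Upsilon\,E = I$, which is the required orthonormality.

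Next I would verify the spectral identity. The column-wise relations $AA^{\top} c_i = \lambda_i c_i$ give $AA^{\top}\Upsilon = \Upsilon\Lambda$; multiplying on the right by $E$ and exploiting that the diagonal matrices $E$ and $\Lambda$ commute, one obtains $AA^{\top}Q=Q\Lambda$, and hence $AA^{\top}QQ^{\top}=Q\Lambda Q^{\top}$. To finish, I need $AA^{\top}QQ^{\top}=AA^{\top}$; this is the main subtlety. The point is that $\mathrm{col}(Q)=\mathrm{col}(\Upsilon)$ is exactly the orthogonal complement of $\ker(AA^{\top})$, since the kernel is spanned by the zero-eigenvalue eigenvectors, which (by symmetry of $AA^{\top}$) are orthogonal to each of the $c_i$. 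Consequently $QQ^{\top}$ is the orthogonal projection onto $(\ker AA^{\top})^{\perp}$ and $AA^{\top}(I-QQ^{\top})=0$, exactly as in the argument used in the proof of Lemma \ref{positivesemidef}. I expect this to be the only delicate step: one must be sure that a complete set of nonzero eigenvectors really spans the entire orthogonal complement of the kernel, i.e.\ that the geometric multiplicities sum to $\mathrm{rank}(AA^{\top})$, which is guaranteed by the diagonalizability of $AA^{\top}$.

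For part (ii), once (i) is in hand I would simply invoke Lemma \ref{W_nzed_A} with the decomposition $AA^{\top}=Q\Lambda Q^{\top}$ to obtain $W=A^{\top}Q\Lambda^{-1}Q^{\top}$. Substituting $Q=\Upsilon E$ and again using that $E$ and $\Lambda$ are diagonal (so $E\Lambda^{-1}E$ is again diagonal) yields $W=A^{\top}\Upsilon(E\Lambda^{-1}E)\Upsilon^{\top}=A^{\top}\Upsilon D\Upsilon^{\top}$ with $D=\mathrm{diag}(1/(\|c_i\|^2\lambda_i))$. Setting $C=\Upsilon D\Upsilon^{\top}$ and expanding the product entrywise then gives the stated formula $C_{ij}=\sum_{k}\Upsilon_{ik}\Upsilon_{jk}/(\|c_k\|^2\lambda_k)$. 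This final part is entirely routine linear algebra; the whole nontrivial content of the proof lives in the projection argument in (i).
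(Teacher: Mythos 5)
Your proposal is correct and follows essentially the same route as the paper: part (ii) is word-for-word the paper's argument (invoke Lemma \ref{W_nzed_A}, substitute $Q=\Upsilon E$, and expand $C=\Upsilon D\Upsilon^{\top}$ entrywise). For part (i) the paper gives only the one-line remark that one divides each column by its norm, so your additional verification that $Q^{\top}Q=I$ (which requires the columns of $\Upsilon$ to be mutually orthogonal, a hypothesis the lemma leaves implicit but which holds for the eigenvectors the paper actually feeds into it) and that $QQ^{\top}$ projects onto $(\ker AA^{\top})^{\perp}$ is a welcome filling-in of detail rather than a different approach.
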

\begin{proof}
{\begin{itemize}
        \item[\rm (i)] In order to obtain normal eigenvectors, it is enough to divide column $c_i$ of $\Upsilon$ by its norm, that is to multiply the matrix $\Upsilon$ from right by the diagonal matrix $E={\rm diag}(\frac{1}{||c_i||})_{1\leq i\leq n}$ to get $Q=\Upsilon E$.
            \item[\rm (ii)] By Lemma \ref{W_nzed_A},
$W=A^{\top}Q \Lambda^{-1} Q^{\top}=A^{\top}\Upsilon E \Lambda^{-1} E^{\top} \Upsilon^{\top}$. Since both $E$ and $\Lambda$ are diagonal, so is $E \Lambda^{-1} E^{\top}$; Setting $D=E\Lambda^{-1} E^{\top}$, we obtain $W=A^{\top} \Upsilon D \Upsilon^{\top}$, where $D={\rm diag}(\frac{1}{||c_i||^2 \lambda_i})_{1\leq i \leq n}$; Setting $C=\Upsilon D \Upsilon^{\top}$ we obtain $W=A^{\top} C$ and $C_{ij}=\sum_{k} \frac{\Upsilon_{ik} \Upsilon_{jk}}{||c_k||^2 \lambda_k}$, as required.
\end{itemize}}
\end{proof}

\section{The function $\nu_B$ and some of its properties}\label{nu sec}

In this section, we consider an order on the set $\Delta_B$ and based on this define a function $\nu_B$ on the set $\Delta_B \times \Delta_B$ and inspect some of its properties. For an integer $b_i\geq 2$, the following linear order makes $\Delta_{b_i}$ a totally ordered set:
$$0\prec_i 1\prec_i \ldots \prec_i b_i-1\prec_i g$$
and when this order is induced on the product set $\Delta_B=\Delta_{b_1}\times \cdots \times \Delta_{b_\ell}$, a poset is obtained; More precisely, for two elements $x=x_1\cdots x_{\ell}$  and $y=y_1\cdots y_{\ell}$ with $x_i,y_i\in \Delta_{b_i}$, $(1\leq i \leq \ell)$, we have $x\preceq_B y$ if and only if $x_i \preceq_i y_i$ holds for $i=1,\ldots,\ell$. Below is presented the definition of
a useful function on $\Delta_B\times \Delta_B$.

\begin{definition}\label{nui-nuB-def}
Consider the $\ell$-tuple $B=(b_1,b_2,\ldots,b_{\ell})$, where $b_i\geq 2$ is integer for $i=1,\ldots,\ell$.
For any $i$,\, \, $(1\leq i\leq \ell)$, we define the function $\nu_{i}$ on $\Delta_{b_i} \times \Delta_{b_i}$ as
	\[
		\nu_i(x,y)=\left\{
			\begin{matrix}-b_i & \text{if }\,\,  x=y=g,\\
-y & \text{if } x=y\neq g,\\
1 & \text{if }\,\, x\prec y,\\				
0 & \text{if } x\succ y.
			\end{matrix}
			\right.
	\]
Now the function $\nu_B$ is defined on the product set $\Delta_B\times \Delta_B$ by the following product rule
\begin{equation}\label{nuB}
\nu_B(x_1\cdots x_{\ell},y_1\cdots y_{\ell})=\prod_{i=1}^{\ell}\nu_i(x_i,y_i)
\end{equation}
\end{definition}

\begin{remark}\label{incidenceAlg}
The function $\nu$ satisfies the property `` $\nu_B(x,y)=0$ unless $x\preceq_B y$"; This means that it is an
element of the incidence algebra of the poset $\Delta_B$ (For the definition and some examples of this concept, see for instance Chapter 8 of  \cite{CameronNotes}). It is observed that $\nu_B$ satisfies
\begin{equation}\label{mobius type}	
		\sum_{x \preceq  z \preceq  y} \nu_B(x,z)=\left\{
			\begin{matrix}\prod_{i=1}^{\ell}(y'_i-2x'_i) & \text{if }\,\,  x\preceq y,\\
				0 & \text{otherwise. }
			\end{matrix}
			\right.
	\end{equation}
where the values $x'_i$, $(1\leq i\leq \ell),$ are defined
	\[
		x'_i=\left\{
			\begin{matrix} b_i & \text{if }\,\,  x_i=g,\\
x_i & \text{otherwise. }
			\end{matrix}
			\right.
	\]
and $y'_i$'s are defined similarly. The equation (\ref{mobius type}) shows similarities between the function $\nu_B$ and the  M\"obius function of the poset $\Delta_B$.

\end{remark}

 Some useful identities about $\nu_B$ are stated in Proposition \ref{mu.mu.sum}, but before stating this proposition we need some definitions and lemmas.

\begin{definition}\label{a0a1a2a3}  Let $\ell$ be a positive integer, $B=(b_1,\ldots,b_{\ell})$ and let $v', v'' \in \Delta_B$.
 Let $m,n$ be integers with $0\leq m,n \leq \ell$ such that $|G_{v'}|= \ell -n$ and $|G_{v''}|= \ell -m$. Define the sets $A_3, A_2, A_1$ and $A_0$ by
 $A_3=\overline{G}_{v'}\cap \overline{G}_{v''}$, $A_2=G_{v''}\setminus G_{v'}$, $A_1=G_{v'}\setminus G_{v''}$
 and $A_0= G_{v'}\cap G_{v''}$.
 \end{definition}

 \begin{lem} \label{A_is} Let $v',v'' \in \Delta_B$ and the sets $A_0$, $A_1$,$A_2$ and $A_3$ be as in Definition \ref{a0a1a2a3}.
    \begin{itemize}
        \item[\rm (i)] The sets $A_3, A_2, A_1$ and $A_0$ are mutually disjoint and
            $A_0\cup A_1 \cup A_2 \cup A_3=[\ell]$.
              Moreover $A_0\neq [\ell]$ unless $v'=v''=g^{\ell}$.
        \item[\rm (ii)] If $A_1=A_2=\emptyset$, then $A_0=G_v=G_{v'}$ and $\overline{G}_{v'}=\overline{G}_{v''}=A_3$; If furthermore $v'\neq v''$, then there exists $i\in A_3$ such that $v'_i\neq v''_i$
    \end{itemize}
 \end{lem}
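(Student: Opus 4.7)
The statement is essentially a bookkeeping lemma about how two gap patterns partition $[\ell]$, and the plan is to handle it with elementary set manipulations; the real point is to record these facts in a form that later sections can invoke without fuss.

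For part (i), my plan is to first observe that the four sets $A_0,A_1,A_2,A_3$ correspond exactly to the four cells of the partition of $[\ell]$ obtained by intersecting $\{G_{v'},\overline{G}_{v'}\}$ with $\{G_{v''},\overline{G}_{v''}\}$: namely $A_0=G_{v'}\cap G_{v''}$, $A_1=G_{v'}\cap\overline{G}_{v''}$, $A_2=\overline{G}_{v'}\cap G_{v''}$, $A_3=\overline{G}_{v'}\cap\overline{G}_{v''}$ (rewriting the set differences $G_{v'}\setminus G_{v''}$ and $G_{v''}\setminus G_{v'}$ in this intersection form). Disjointness and union then follow because, for each index $i\in[\ell]$, exactly one of the four membership patterns with respect to $G_{v'}$ and $G_{v''}$ holds. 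For the ``moreover'' clause, $A_0=[\ell]$ forces $G_{v'}\cap G_{v''}=[\ell]$, hence $G_{v'}=G_{v''}=[\ell]$, which by definition of $G$ means every coordinate of $v'$ and of $v''$ equals $g$, so $v'=v''=g^{\ell}$.

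For part (ii), I would start by noting that $A_1=\emptyset$ and $A_2=\emptyset$ are equivalent to $G_{v'}\subseteq G_{v''}$ and $G_{v''}\subseteq G_{v'}$ respectively, so together they say $G_{v'}=G_{v''}$. Substituting this equality into the definitions immediately gives $A_0=G_{v'}\cap G_{v''}=G_{v'}=G_{v''}$ and $A_3=\overline{G}_{v'}\cap\overline{G}_{v''}=\overline{G}_{v'}=\overline{G}_{v''}$ (and I would flag that ``$G_v$'' in the statement is a typographical slip for $G_{v''}$). For the last assertion, if $v'\neq v''$ they differ in at least one coordinate $i$; but on the common gap set $G_{v'}=G_{v''}$ both $v'_i$ and $v''_i$ equal $g$, so any such $i$ must lie in the complement $\overline{G}_{v'}=A_3$, which is precisely what is claimed.

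There is no genuine obstacle here: both parts reduce to rearranging the defining formulas for $A_0,\ldots,A_3$ and invoking the definition of $G_v$. The only minor point to be careful about is to write the set differences $G_{v'}\setminus G_{v''}$ and $G_{v''}\setminus G_{v'}$ as intersections with complements before asserting the four-cell partition, and to separate the two implications $A_1=\emptyset\Leftrightarrow G_{v'}\subseteq G_{v''}$ and $A_2=\emptyset\Leftrightarrow G_{v''}\subseteq G_{v'}$ cleanly so that $G_{v'}=G_{v''}$ is deduced as a biconditional rather than asserted.
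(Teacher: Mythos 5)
Your proof is correct and is exactly the elementary set-theoretic argument the paper intends: the published proof of this lemma reads only ``The proof is straightforward,'' and your write-up supplies precisely those omitted details (including correctly identifying the typo $G_v$ for $G_{v''}$ in part (ii)). Nothing further is needed.
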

\begin{proof}{The proof is straightforward. }
\end{proof}

\begin{lem}\label{mu.mu} Let $w, v',v'' \in \Delta_B$.
  \begin{itemize}
        \item[\rm (i)] If $\nu_B (w,v')\nu_B (w,v'')\neq 0$, then $G_w \subseteq A_0$.
        \item[\rm (ii)] If $G_w \subseteq A_0$, then $\nu_B (w,v')\nu_B (w,v'')=p_3 p_2 p_1 p_0$, where
        \begin{align*}
        p_0=&\prod_{i\in G_w}b_i^2 ,& p_1=\prod_{i\in A_1}{\nu}_i(w_i,v''_i),\\
        p_2=&\prod_{i\in A_2}{\nu}_i(w_i,v'_i),& p_3=\prod_{i\in A_3}{\nu}_i(w_i,v'_i){\nu}_i(w_i,v''_i)
        \end{align*}
    \end{itemize}
\end{lem}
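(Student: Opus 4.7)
The statement is essentially a bookkeeping lemma: since $\nu_B$ is a product over coordinates of functions $\nu_i$ that vanish unless the arguments are comparable, both claims reduce to a case analysis at each coordinate $i \in [\ell]$, organized by which of the four blocks $A_0, A_1, A_2, A_3$ contains $i$. I would present the proof in this order.

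For part (i), the plan is to exploit the fact that $g$ is the maximum element of $(\Delta_{b_i}, \preceq_i)$. Observe that for each $i$, the definition of $\nu_i$ implies that $\nu_i(x,y) \neq 0$ forces $x \preceq_i y$: the zero case is precisely $x \succ_i y$, and the three nonzero cases correspond to $x = y$ or $x \prec_i y$. Now if $\nu_B(w, v')\nu_B(w, v'') \neq 0$, then $\nu_i(w_i, v'_i) \neq 0$ and $\nu_i(w_i, v''_i) \neq 0$ for every $i$, so $w_i \preceq_i v'_i$ and $w_i \preceq_i v''_i$. Pick any $i \in G_w$, so that $w_i = g$. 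Since $g$ is the top of $\preceq_i$, the inequalities force $v'_i = v''_i = g$, i.e.\ $i \in G_{v'} \cap G_{v''} = A_0$. This shows $G_w \subseteq A_0$.

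For part (ii), the plan is to split the defining product of $\nu_B$ according to the partition $[\ell] = A_0 \sqcup A_1 \sqcup A_2 \sqcup A_3$ guaranteed by Lemma \ref{A_is}(i):
\[
\nu_B(w,v')\nu_B(w,v'') \;=\; \prod_{s=0}^{3} \prod_{i \in A_s} \nu_i(w_i, v'_i)\,\nu_i(w_i, v''_i).
\]
I would then evaluate each inner product using the hypothesis $G_w \subseteq A_0$, which in particular guarantees $w_i \neq g$ for every $i \notin A_0$. For $i \in A_1$ we have $v'_i = g$ and $v''_i \neq g$, so $\nu_i(w_i, v'_i) = 1$ (because $w_i \prec_i g$), and the factor reduces to $\nu_i(w_i, v''_i)$; the product over $A_1$ is thus $p_1$. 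The case $A_2$ is symmetric, giving $p_2$. For $i \in A_3$ neither coordinate is $g$, so both $\nu_i$ factors survive unchanged, giving $p_3$. For $i \in A_0$ we have $v'_i = v''_i = g$; if $i \in G_w$ then both factors equal $-b_i$, contributing $b_i^2$, and if $i \in A_0 \setminus G_w$ then $w_i \prec_i g$ and both factors equal $1$, contributing $1$. Collecting the $A_0$ contribution yields $p_0 = \prod_{i \in G_w} b_i^2$, and multiplying the four pieces completes the identity.

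There is no real obstacle here: the whole argument is driven by the definition of $\nu_i$ and the fact that $g$ sits on top of every $\preceq_i$. The only place where care is needed is making sure that when $i \in A_0$ but $i \notin G_w$, the factor contributes $1$ rather than something involving $b_i$ — this is why the formula for $p_0$ indexes over $G_w$ and not over $A_0$. I would highlight this point explicitly to avoid confusion when the lemma is invoked later.
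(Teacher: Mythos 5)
Your proof is correct and follows essentially the same route as the paper: part (ii) is exactly the paper's decomposition of the product over the partition $[\ell]=A_0\sqcup A_1\sqcup A_2\sqcup A_3$ followed by coordinatewise evaluation from the definition of $\nu_i$, with the details (which the paper omits as ``straightforward'') spelled out. One minor imprecision: since $\nu_i(0,0)=-0=0$, the nonvanishing of $\nu_i(x,y)$ is not exactly equivalent to $x\preceq_i y$, but only the implication ``$\nu_i(x,y)\neq 0 \Rightarrow x\preceq_i y$'' is used in your part (i), so the argument stands.
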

\begin{proof}
{ The proof of part (i) is straightforward. The proof of part (ii) is obtained using
\begin{align*}
    \nu_B (w,v')\nu_B (w,v'')=& \prod_{i=1}^{\ell}{\nu}_i(w_i,v'_i){\nu}_i(w_i,v''_i)\\
    =&  \prod_{j=0}^{3}\prod_{i\in A_j}{\nu}_i(w_i,v'_i){\nu}_i(w_i,v''_i),
\end{align*}
and the definition of $\nu_i$.
}
\end{proof}

\begin{pro}\label{mu.mu.sum} Let $v',v''\in \Gamma_B$. Then
\begin{itemize}
    \item [\rm (i)] ${\displaystyle\sum_{w\in V_{\ell k}}\nu_B(w,v')=
    \left\{
		                     \begin{array}{rl}
		                     		(-1)^{\ell-k} S_{\ell-k}(B),\,\, & \hbox{if $v'=g^{\ell}$,} \\
		                            	\\
		                            	0,\,\,  &\hbox{otherwise. }
		                      \end{array}
		                       \right.}$
    \item [\rm (ii)]${\displaystyle\sum_{w\in V_{\ell k}}\nu_B(w,v')\nu_B(w,v'')=
    \left\{
		                     \begin{array}{rl}
		   	{\displaystyle S_{\ell-k}(B(G_{v'}))\prod_{i\in G_{v'}} b_i \prod_{i\in \overline{G}_{v'}} (v'_i+{v'_i}^2)},\,\, & \hbox{if $v'=v''$,} \\
		                            	\\
		                            	0,\,\,  &\hbox{otherwise. }
		                      \end{array}
		                       \right.}$
\end{itemize}
\end{pro}
\begin{proof}
{\begin{itemize}
    \item [\rm (i)] For $w\in V_{\ell,k}$ we have
    $\nu_B(w,g^\ell)=\prod_{i\in G_w}(-b_i)$,
    hence we obtain
    $$\displaystyle\sum_{w\in V_{\ell k}}\nu_B(w,g^{\ell})=(-1)^{\ell-k} S_{\ell-k}(B)$$
which proves part(i) in the case $v'=g^{\ell}$.

Now suppose that $v' \in V_{\ell;\leq k;B}\setminus \{g^{\ell}\}$, hence for some $1\leq j \leq \ell$,\, $v'_j\neq g$. Then from
    \begin{align*}
    {\displaystyle \sum_{w\in V_{\ell k}}\nu_B(w,v')}=& \sum_{w_i=0}^{b_i-1} \prod_{i=1}^{\ell} \nu_i (w_i,v'_i)\\
   =&\prod_{i=1}^{\ell}\sum_{w_i=0}^{b_i-1}\nu_i (w_i,v'_i),
    \end{align*}
    by using $\sum_{w_j=0}^{b_j-1}\nu_j (w_j,v'_j)=0$,
     the right side is simplified to $0$, as required.

\item [\rm (ii)] To prove this part, observe that if $|A_0|<\ell-k$, each summand in the left, is zero and there is nothing to prove. So, let $|A_0|\geq \ell-k$; Setting
$X_{\ell k}(B,G)=\{w\in V_{\ell k}(B): G_w=G\}$ we obtain
\begin{equation}\label{sumnunux}\sum_{w\in V_{\ell k}(B)}\nu_B(w,v') \nu_B(w,v'')=\sum_{G\in {{A_0}\choose{\ell-k}}} \sum_{w\in X_{\ell k}(B,G)}\nu_B(w,v') \nu_B(w,v'')\end{equation}

 First we compute the summand $\sum_{w\in X_{\ell k}(B,G)}\nu(w,v') \nu(w,v'')$, for a fixed $G\in {{A_0}\choose{\ell-k}}$.
 For this, without loss of generality, let $A_3=\{1, \ldots, a_3\}$, $A_2=\{a_3+1, \ldots, a_3+a_2\}$, $A_1=\{a_3+a_2+1, \ldots, a_3+a_2+a_1\}$ and
 $A_0=\{a_3+a_2+a_1+1, \ldots, \ell\}$, where $a_1,a_2,a_3$ are non-negative integers. Moreover, without loss of generality, let $G=\{k+1, \ldots, \ell\}$. Now $w\in X_{\ell k}(B, G)$ can be factorized in the form $w=qrstg^{\ell-k}$, with $|q|=a_3$, $|r|=a_2$, $|s|=a_1$ and $|t|=a_0 - (\ell-k)$ and when $w$ runs over $X_{\ell k}(B,G)$, each of the words $q,r,s$ and $t$ runs over a proper set accordingly. By part (ii) of Lemma \ref{mu.mu} we obtain
 \begin{equation}
 \label{pis}
  \sum_{w\in X_{\ell k}(B,G)}\nu_B(w,v') \nu_B(w,v'')=P_3 P_2 P_1 P_0,
 \end{equation}
 where
 {\small
 \begin{align*}
        P_0=&\prod_{i\in A_0}b_i \prod_{i\in G}b_i,& P_1=\prod_{i \in A_1}\sum_{w_i=0}^{b_i-1}{\nu}_i(w_i,v'_i),\\
        P_2=&\prod_{i \in A_2}\sum_{w_i=0}^{b_i-1}{\nu}_i(w_i,v''_i),&
        P_3=\prod_{i \in A_3}\sum_{w_i=0}^{b_i-1}{\nu}_i(w_i,v'_i){\nu}_i(w_i,v''_i)
        \end{align*}}
Now, we consider two cases:
\\
{\bf Case 1.} $v'=v''$;
In this case $A_0=G_{v'}$ and $A_1=A_2=\emptyset$, hence $P_1=P_2=1$ and
$\sum_{w\in X_{\ell k}(B,G)}\nu_B^2(w,v')=P_3P_0$.
In this case, $P_3={\displaystyle \prod_{i\in {\overline G}_{v'}} (v'_i+{v'_i}^2)}$ and
$P_0={\displaystyle \prod_{i\in G_{v'}}b_i \prod_{i\in G}b_i}$, so
\begin{align*}\sum_{w\in V_{\ell k}(B)}\nu_B(w,v') \nu_B(w,v'')=&\sum_{G\in {{G_{v'}}\choose{\ell-k}}}  \sum_{w\in X_{\ell k}(B,G)}\nu_B(w,v') \nu_B(w,v'')\\
=&\sum_{G\in {{G_{v'}}\choose{\ell-k}}} \prod_{i\in {\overline G}_{v'}} (v'_i+{v'_i}^2)\prod_{i\in G_{v'}}b_i \prod_{i\in G}b_i\\
=&\prod_{i\in {\overline G}_{v'}} (v'_i+{v'_i}^2)\prod_{i\in G_{v'}}b_i\sum_{G\in {{A_0}\choose{\ell-k}}}  \prod_{i\in G}b_i\\
=& \prod_{i\in {\overline G}_{v'}} (v'_i+{v'_i}^2)\prod_{i\in G_{v'}}b_i\,\,\,\,\, S_{\ell-k}(B(G_{v'}))
\end{align*}\\
{\bf Case 2.}  $v'\neq v''$; If $A_1\neq \emptyset$ then $P_1=0$ and if $A_2\neq \emptyset$ then $P_2=0$. Otherwise, if $A_1=A_2=\emptyset$, then
by Lemma \ref{A_is} (ii), $A_3\neq \emptyset$ and there exists $i\in A_3$ with $v'_i\neq v''_i$; For this $i$,
$\sum_{w_i}\nu_i(w_i,v'_i) \nu_i(w_i,v''_i)=0$ thus $P_3=0$. Hence, the hypothesis $v'\neq v''$ implies that the right side of (\ref{pis}) is zero in either case, and we get the result by (\ref{sumnunux}).
\end{itemize}}
\end{proof}

\begin{pro}\label{someNuId} Let $v'\in \Gamma_B$. Then
\begin{itemize}
\item[\rm(i)] For any $u\in U_{\ell}$ we have
\begin{equation}\label{nuId1}
				\sum_{y\in M_{\ell,k;B}(u)} \nu_B(y,v')=(-1)^{\ell-k}S_{\ell-k}(B(G_{v'}))\, \nu_B(u,v'),
			\end{equation}
\item[\rm(ii)] For any $v\in V_{\ell k}$ we have
\begin{equation}
\label{Az2}
\sum_{u\in N_{\ell, k;B}(v)}\nu_B(u,v')=(-1)^{\ell-k} \nu_B(v,v') \end{equation}
\end{itemize}
\end{pro}
\begin{proof}
    {     \begin{itemize}
    \item[\rm(i)] If the summand $\nu(y,v')$ is nonzero, then $G_y \subseteq G_{v'}$, on the other hand the non-gaped positions of all such $y$'s are the same as $u$.
            Let $G_y= \{x_1, x_2, \ldots, x_{\ell-k}\}$, then
            $\nu_B(y,v')=(-1)^{\ell-k} b_{x_1}b_{x_2}\cdots b_{x_{\ell-k}}\, \nu_B(u,v')$. Therefore
            \begin{align*}\sum_{y\in M_{\ell, k;B}(u)} \nu_B(y,v')=& (-1)^{\ell-k}\nu_B(u,v')\sum_{\{x_1, x_2, \ldots, x_{\ell-k}\}\subseteq G_{v'}}b_{x_1}b_{x_2}\ldots b_{x_{\ell-k}}\\
            =& (-1)^{\ell-k} S_{\ell-k}(B(G_{v'}))\, \nu_B(u,v'),
            \end{align*} as required.
    \item[\rm(ii)]
We distinguish two cases:

 			{\bf Case (a).} Suppose that $G_{v}\subseteq G_{v'}$. Then for any $u\in N_{\ell, k;B}(v)$ we have	
				$\nu_B(u,v')=\prod_{i\in \overline{G}_{v'}}\nu_i(u_i,v'_i)=\prod_{i\in \overline{G}_{v'}}\nu_i(v_i,v'_i)$ and since there are totally  $\prod_{i\in G_{v'}} b_i $ such words $u$, the left side of equation (\ref{Az2}) equals
\begin{align*}
\prod_{i\in G_v} b_i \prod_{i\in \overline{G}_{v'}}\nu_i(v_i,v'_i)&=(-1)^{|G_v|}\prod_{i\in G_{v}}\nu_i(g,g)
\prod_{i \in G_{v'}\setminus G_v} \nu_i(v_i,g) \prod_{i\in \overline{G}_{v'}}\nu_i(v_i,v'_i)\\
&=(-1)^{\ell-k} \prod_{i=1}^{\ell}\nu_i(v_i,v'_i)
\end{align*}
which equals $(-1)^{\ell-k} \nu_B(v,v')$ as required.
			
			{\bf Case (b).} Suppose that $G_{v}\not \subseteq G_{v'}$, consequently $G_v \setminus G_{v'}\neq \emptyset$.
			Now for any $i\in G_v \setminus G_{v'}$ we have $\nu(v_i,v'_i)=0$, thus the right side of (\ref{Az2}) is $0$; The following argument shows that the left side is $0$ as well: The nonzero summands in the left side of (\ref{Az2}) are obtained from elements $u\in X$ where the subset $X\subseteq \Sigma_B$ is given by
			$$X=\{u\in \Sigma_B : u_i\leq v_i {\rm \,\, for \,\,} i\in G_{v}\setminus G_{v'} {\rm \,\,and\,\,} u_i=v_i {\rm \,\, for \,\,} i\in \overline{G}_v \}.$$
			Thus we obtain
			\begin{align*}
				\sum_{u\in N_{\ell k;B}(v)} \nu_B(u,v')&=\sum_{u\in X} \nu_B(u,v')\\
				&=\sum_{u\in X} \prod_{i=1}^\ell  \nu_i(u_i,v'_i)\\
				&=\sum_{u\in X} \left( \prod_{i\in \overline{G}_v} \nu_i(u_i,v'_i) \prod_{i\in G_v \setminus G_{v'}} \nu_i(u_i,v'_i) \prod_{i\in G_v \cap G_{v'}} \nu_i(u_i,g)\right)\\
				&=\left( \prod_{i\in \overline{G}_v} \nu_i(v_i,v'_i) \right) \left( \prod_{i\in G_v \setminus G_{v'}} \sum_{u_i=0}^{v'_i}\nu_i(u_i,v'_i)\right)
				\left( \prod_{i\in G_v \cap G_{v'}} b_i\right)
			\end{align*}
			which is $0$ because for any $i\in G_v\setminus G_{v'}$ we have $\sum_{u_i=0}^{v'_i}\nu_i(u_i,v'_i)=\sum_{u_i=0}^{v'_i-1}1-v'_i=0.$
Thus (\ref{Az2}) is true in either case.
    \end{itemize}
    }
    \end{proof}

\begin{definition} \label{defPQ} Let $u\in \Sigma_B$ and $v\in \Delta_B$. Then $P(u,v)$ and $Q(u,v)$ are defined as below
\begin{align*}
P(u,v)&=\{i:1\leq i\leq \ell , v_i\neq g, v_i=u_i\}\\
Q(u,v)&=\{i:1\leq i\leq \ell , v_i\neq g, v_i\neq u_i\}
\end{align*}
We denote $P(u,v)$ and $Q(u,v)$ by $P$ and $Q$, respectively, if there is no danger of confusion.
\end{definition}
With the above definition, it is obvious that $|P(u,v)|+|Q(u,v)|=|\overline{G}_v|.$
Particularly, if $v\in V_{\ell,k}$ and $|P|=p$ then $|Q|=k-p$.

\begin{pro} \label{NuIdentity} Let $u\in \Sigma_B $ and $v\in \Delta_B$. Recall the notation of Definition \ref{defPQ}.
        \begin{itemize}
            \item[\rm(i)]  For $1\leq i\leq \ell $, let
                $$\phi_i(u,v)=\sum_{j=\max\{1,v_i\}}^{b_i-1} \frac{\nu_i(v_i,j)\nu_i(u_i,j)}{j(j+1)}.$$
            Then we have
                $$\phi_i(u,v)=\left\{
                             \begin{array}{ll}
                                    \frac{b_i-1}{b_i}, & \hbox{if $i\in P$;} \\
                                        \\
                                        \frac{-1}{b_i}, & \hbox{otherwise, i.e. if } i\in Q.
                              \end{array}
                               \right.$$
            \item[\rm(ii)] Let $G$ be a given subset of $[\ell]$ with $G_v \subseteq G$. Then the following identity holds
            \begin{equation}
                \label{fracNuId} {\displaystyle \sum_{v'\in \Gamma_B\,\, ,G_{v'}=G} \frac{\nu_B(v,v')\nu_B(u,v')}{ {\displaystyle \prod_{i\in \overline{G}_{v'}} (v'_i+{v'_i}^2)}}
                }= \frac{\displaystyle (-1)^{|Q\setminus G|+|G_v|} \prod_{i \in G_v} b_i \prod_{i\in P \setminus G}(b_i-1)}{\displaystyle \prod_{i \in \overline{G}}b_i}.
            \end{equation}
        \end{itemize}
    \end{pro}

\begin{proof}{ \begin{itemize}
            \item[\rm(i)]  The proof if this part is easy and left to the reader.

            \item[\rm(ii)]
            Note that if $G_{v} \subseteq G_{v'}=G$, it is easily obtained that
            \begin{equation*} \nu_B(v,v')\,\nu_B(u,v')=(-1)^{|G_v|}\prod_{i\in G_v} b_i \prod_{i\in \overline{G}} \nu_i(v_i,v'_i)\,\nu_i(u_i,v'_i)
            \end{equation*}
            hence if we denote by $S$ the left side of (\ref{fracNuId}), we obtain
            \begin{align*}
               S &=(-1)^{|G_v|} {\displaystyle \prod_{i\in G_v} b_i} {\displaystyle \sum_{v'\in \Gamma_B\,\, ,G_{v'}=G} \,\,\prod_{i\in \overline{G}} \frac{ {\nu_i(v_i,v'_i)\nu_i(u_i,v'_i)}} { {v'_i(v'_i+1)}}
                }\\
                &=(-1)^{|G_v|} {\displaystyle \prod_{i\in G_v} b_i  \prod_{i \in \overline{G}} \,\,\sum_{j=\max\{1,v_i\}}^{b_i-1} \frac{\nu_i(v_i,j)\nu_i(u_i,j)}{j(j+1)}}\\
                &=(-1)^{|G_v|} {\displaystyle \prod_{i\in G_v} b_i  \prod_{i \in \overline{G}}\phi_i(u,v)}
                \end{align*}
                Thus by part (i), we get
                \begin{align*}
                S &=(-1)^{|G_v|} {\displaystyle \prod_{i\in G_v} b_i  \prod_{i \in \overline{G}\cap Q}\frac{-1}{b_i} \prod_{i \in \overline{G}\cap P}\frac{b_i-1}{b_i}}\\
                &=\frac{\displaystyle (-1)^{|Q \setminus G|+|G_v|} \prod_{i \in G_v} b_i \prod_{i\in P\setminus G}(b_i-1)}{\displaystyle \prod_{i \in \overline{G}}b_i}.
                \end{align*}
 \end{itemize}
}
\end{proof}

\section{Orthonormal nonzero eigendecomposition of the matrices $A_{\ell k;B}A_{\ell k;B}^{\top}$ and $A_{\ell k;B}^{\top}A$}\label{eigen sec}

In this section we give an orthonormal nonzero eigendecomposition of the matrices $A_{\ell k;B}A_{\ell k;B}^{\top}$ and $A_{\ell k;B}^{\top}A$. Eigenvectors of these matrices are the elementary symmetric polynomials and the entries of the corresponding eigenvectors are given in terms of the function $\nu_B$. Using the properties of $\nu_B$ we show that these
eigenvectors are mutually orthogonal.

    \begin{definition} Let $n,  k\leq \ell$ be integers. Given $B=(b_1,\ldots,b_{\ell})$ and $v'\in V'_{\ell,n;B}$, we define the column vector ${\bf x}_{v'}^{\ell,k,n}$ as a vector whose rows are indexed by the elements of $V_{\ell,k;B}$ with entries ${\bf x}_{v'}^{\ell,k,n}(w)=(-1)^{\ell-k}\nu_B(w,v')$. The column vector ${\bf z}_{v'}^{\ell,n}$ is then defined as ${\bf z}_{v'}^{\ell,n}={\bf x}_{v'}^{\ell,\ell,n}$; In other words, ${\bf z}_{v'}^{\ell,n}$ is a column vector whose rows are indexed by elements $u$ of $\Sigma_B$ with entries ${\bf z}_{v'}^{\ell,n}(u)=\nu_B(u,v')$. When there is no need to emphasize on the parameters $\ell$, $k$ and $n$, we simply write ${\bf x}_{v'}$ and ${\bf z}_{v'}$.
    \end{definition}

    \begin{pro}\label{p-nrm-x}
Let $v' \in \Gamma_B$ and $n= |{\overline{G}}_{v'}|$. Then the following identity holds:
$$\parallel {\bf x}_{v'}^{\ell,k,n}\parallel^2=  S_{\ell -k}(B(G_{v'}))\prod_{i\in {\overline{G}}_{v'}} (v'_i+{v'_i}^2)\prod_{i\in G_{v'}} b_i$$
\end{pro}
\begin{proof}{See proposition \ref{mu.mu.sum} (ii)}
\end{proof}

The following proposition contains a generalization of Proposition 2 of \cite{kmer-b}:\\

    \begin{pro} \label{matIdThm}
         Let $0\leq k\leq \ell$, $0\leq n\leq \ell$ and $v'\in V'_{\ell n}$. The following matrix identities hold.
           \begin{itemize}
            \item[\rm(i)] $A_{\ell k;B}^{\top}{\bf x}_{v'}^{\ell k n}=S_{\ell-k}(B(G_{v'}))\, {\bf z}_{v'}^{\ell n}$.
            \item[\rm(ii)] $A_{\ell k;B}{\bf z}_{v'}^{\ell n}={\bf x}_{v'}^{\ell k n}$.
            \item[\rm(iii)] $A_{\ell k;B}A_{\ell k;B}^{\top} {\bf x}_{v'}^{\ell k n}= S_{\ell-k}(B(G_{v'}))\, {\bf x}_{v'}^{\ell k n}$.
            \item[\rm(iv)] $A_{\ell k;B}^{\top}A_{\ell k;B} {\bf z}_{v'}^{\ell n}=S_{\ell-k}(B(G_{v'}))\, {\bf z}_{v'}^{\ell n}$.
            \item[\rm(v)] For any two distinct words $v\in V'_{\ell n_1}$ and $u\in V'_{\ell n_2}$, the vectors
            ${\bf x}_{v}^{\ell k n_1}$ and ${\bf x}_{u}^{\ell k n_2}$ are orthogonal.
            \item[\rm(vi)] For any two distinct words $v\in V'_{\ell n_1}$ and $u\in V'_{\ell n_2}$, the vectors
            ${\bf z}_{v}^{\ell n_1}$ and ${\bf z}_{u}^{\ell n_2}$ are orthogonal.
        \end{itemize}
 \end{pro}
\begin{proof}
    {The proofs of (i) and (ii) are concluded from definitions of ${\bf x}_{v'}$ and ${\bf z}_{v'}$ and Proposition \ref{someNuId}. Combining (i) and (ii) yields (iii) and (iv). The proofs of (v) is concluded from Proposition \ref{mu.mu.sum}(ii). The same proposition yields part (vi) by setting $k=\ell$.}
\end{proof}

\begin{thm}\label{egenvalueAAT,all} Let $0\leq k\leq \ell$. Then
\begin{itemize}
            \item[\rm(i)] The set $\{S_{\ell-k}(B(G_{v'})): v'\in  V'_{\ell, n; B},\, 0\leq n\leq \ell\}\}$ consists of all
eigenvalues of the matrix $A_{\ell, k;B}^{\top}A_{\ell, k;B}$ and the set $\{ {\bf z}_{v'}^{\ell n}: v' \in  V'_{\ell, n; B},\, 0\leq n\leq \ell\}$  is a complete set of eigenvectors corresponding to eigenvalues of $A_{\ell, k;B}^{\top}A_{\ell, k;B}$.
Moreover, these eigenvectors are pairwise orthogonal.
             \item[\rm(ii)] The set $\{S_{\ell-k}(B(G_{v'})): v'\in V'_{\ell, \leq k;B}\}$ consists of all non-zero
eigenvalues of the matrix $A_{\ell, k;B}A_{\ell, k;B}^{\top}$. The set
$\{{\bf x}_{v'}^{\ell k n}: v'\in V'_{\ell,\leq k;B}\}$ is a complete set of eigenvectors corresponding to aforementioned nonzero eigenvalues.
Moreover, these eigenvectors are pairwise orthogonal.
\end{itemize}
\end{thm}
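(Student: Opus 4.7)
The plan is to deduce the theorem from Proposition \ref{matIdThm} together with a dimension count, since that proposition has already established the eigenvalue identification and the pairwise orthogonality. The only genuinely new content is (a) verifying that the produced families are large enough to be complete, and (b) pinning down precisely when the eigenvalue $S_{\ell-k}(B(G_{v'}))$ vanishes.

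For part (i), I would combine Proposition \ref{matIdThm}(iv) and (vi): each $\mathbf{z}_{v'}^{\ell,n}$ with $v' \in V'_{\ell,n;B}$, $0 \le n \le \ell$, is an eigenvector of $A_{\ell,k;B}^{\top}A_{\ell,k;B}$ with eigenvalue $S_{\ell-k}(B(G_{v'}))$, and vectors corresponding to distinct $v'$ are orthogonal. Nonvanishing follows from Proposition \ref{p-nrm-x} specialized to $k=\ell$, since the squared norm $\prod_{i\in \overline{G}_{v'}}(v'_i+{v'_i}^2)\prod_{i\in G_{v'}} b_i$ is strictly positive (each $b_i\ge 2$ and each $v'_i\ge 1$ on $\overline{G}_{v'}$). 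Finally, a direct count gives
\[
\sum_{n=0}^{\ell}|V'_{\ell,n;B}| \;=\; \sum_{n=0}^{\ell} S_n(B-1) \;=\; \prod_{i=1}^{\ell}\bigl(1+(b_i-1)\bigr) \;=\; \prod_{i=1}^{\ell} b_i \;=\; |\Sigma_B|,
\]
which is exactly the order of $A_{\ell,k;B}^{\top}A_{\ell,k;B}$. A family of $|\Sigma_B|$ pairwise orthogonal nonzero vectors in $\mathbb{R}^{|\Sigma_B|}$ is automatically a basis, so the $\mathbf{z}_{v'}^{\ell,n}$ form a complete eigenbasis and the values $S_{\ell-k}(B(G_{v'}))$ exhaust the spectrum.

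For part (ii), I would first identify the zero-eigenvalue regime. Because every $b_i\ge 2$, the tuple $B(G_{v'})$ has only positive entries, so $S_{\ell-k}(B(G_{v'}))$ is a sum of strictly positive products when $|G_{v'}|\ge \ell-k$, and is an empty sum (hence $0$) otherwise. Equivalently, the nonzero eigenvalues arise exactly from $v' \in V'_{\ell,\le k;B}$, i.e.\ $n=|\overline{G}_{v'}|\le k$. Propositions \ref{matIdThm}(iii) and (v) then provide pairwise orthogonal eigenvectors $\mathbf{x}_{v'}^{\ell,k,n}$ of $A_{\ell,k;B}A_{\ell,k;B}^{\top}$ with these nonzero eigenvalues, and Proposition \ref{p-nrm-x} shows each is nonzero. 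To close, I quote part (i) already established: counting nonzero eigenvalues there with multiplicity gives $\mathrm{rank}(A_{\ell,k;B}^{\top}A_{\ell,k;B})=|V'_{\ell,\le k;B}|=R_k(B)$, so by the standard identity $\mathrm{rank}(A_{\ell,k;B})=\mathrm{rank}(A_{\ell,k;B}A_{\ell,k;B}^{\top})=R_k(B)$. A pairwise orthogonal, nonzero family whose size matches the dimension of the nonzero eigenspace must span it.

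The main obstacle is purely bookkeeping: correctly identifying the zero-eigenvalue regime $|\overline{G}_{v'}|>k$, and verifying the cardinality identity $\sum_n S_n(B-1)=\prod_i b_i$, which is a one-line expansion of $\prod_i(1+(b_i-1))$. All of the analytic content, namely the evaluations of $AA^{\top}\mathbf{x}$ and $A^{\top}A\mathbf{z}$ and the mutual orthogonality, has already been packaged into Propositions \ref{mu.mu.sum}, \ref{someNuId}, \ref{matIdThm} and \ref{p-nrm-x}; the present theorem is their synthesis.
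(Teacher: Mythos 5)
Your proposal is correct and follows essentially the same route as the paper: both deduce the theorem from Proposition \ref{matIdThm} plus the cardinality/dimension count $|\Gamma_B|=|\Sigma_B|$ for part (i) and the equality of nonzero spectra of $A^{\top}A$ and $AA^{\top}$ for part (ii). Your explicit verification of the identity $\sum_n S_n(B-1)=\prod_i b_i$ and of the nonvanishing of the eigenvectors via Proposition \ref{p-nrm-x} only makes explicit what the paper leaves implicit.
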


\begin{proof}
{\begin{itemize}
            \item[\rm(i)] First we note that for any $v'\in  V'_{\ell, n;B}, 0\leq n\leq \ell$, $\, {\bf z}_{v'}^{\ell n}$ is a nonzero vector. By Proposition \ref{matIdThm} (iv), for any  $v'\in \Gamma_B$, ${\bf z}_{v'}^{\ell n}$ is an eigenvector of $A_{\ell, k;B}^{\top}A_{\ell, k;B}$  corresponding to the eigenvalue $S_{\ell-k}(B(G_{v'}))$. By Proposition \ref{matIdThm} (vi), these eigenvectors are pairwise orthogonal.
                Since $|\{ {\bf z}_{v'}^{\ell n}: v' \in  V'_{\ell, n; B},\, 0\leq n\leq \ell\}| = |\Gamma_B|$ and $|\Gamma_B| = |\Sigma_B|$ equals the size of the matrix
                $A_{\ell, k;B}^{\top}A_{\ell, k;B}$, we conclude that $\{{\bf  z}_{v'}^{\ell n}: v' \in  V'_{\ell, n; B},\, 0\leq n\leq \ell\}$  is a complete set of eigenvectors corresponding to eigenvalues of $A_{\ell,k;B}^{\top}A_{\ell,k;B}$, as required.

            \item[\rm(ii)]  The set of all non-zero eigenvalues of $A_{\ell, k;B}A_{\ell, k;B}^{\top}$ is the same as the set of all non-zero eigenvalues of $A_{\ell, k;B}^{\top} A_{\ell, k;B}$. In part (i), we obtained the complete set of eigenvalues of $A_{\ell,k;B}^{\top} A_{\ell,k;B}$. On the other hand, if $0\leq n\leq \ell$ and $v'\in V'_{\ell, n; B}$, then $S_{\ell-k}(B(G_{v'}))=0$ holds if and only if $k<n\leq \ell$.
                 Hence, the set $\{S_{\ell-k}(B(G_{v'})): v'\in V'_{\ell, \leq k;B}\}$ consists of all non-zero
                 eigenvalues of the matrix $A_{\ell, k;B}A_{\ell, k;B}^{\top}$ as well as the matrix $A_{\ell, k;B}^{\top} A_{\ell, k;B}$.
                 Moreover, using Proposition \ref{matIdThm} (iii), the
                 corresponding eigenvectors are $\{{\bf x}_{v'}^{\ell k n}: v'\in V'_{\ell,\leq k; B}\}$ and these vectors are pairwise orthogonal.
 \end{itemize}
}
\end{proof}

The following corollary is a straight conclusion of Theorem \ref{egenvalueAAT,all}.

\begin{cor}\label{nullAAt}
The set $\{ {\bf z}_{v'}^{\ell n}: v' \in  V'_{\ell, n; B},\, k< n\leq \ell\}$,
is a basis for the null-space of the matrix  $A_{\ell, k;B}^{\top}A_{\ell, k;B}$.
\end{cor}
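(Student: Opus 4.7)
The plan is to deduce the corollary directly from part (i) of Theorem \ref{egenvalueAAT,all}, which already provides a complete set of mutually orthogonal eigenvectors of $A_{\ell,k;B}^{\top}A_{\ell,k;B}$ together with the associated eigenvalues. Since $A_{\ell,k;B}^{\top}A_{\ell,k;B}$ is diagonalizable, its null space coincides with the span of those eigenvectors whose associated eigenvalue is $0$. Thus the whole task reduces to characterizing exactly which $v'$ yield $S_{\ell-k}(B(G_{v'}))=0$.

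First I would note that for $v'\in V'_{\ell,n;B}$ we have $|G_{v'}|=\ell-n$, so $B(G_{v'})$ is a sequence of $\ell-n$ positive integers (each $\geq 2$). The elementary symmetric polynomial $S_{\ell-k}(B(G_{v'}))$ is a sum over $(\ell-k)$-subsets of an $(\ell-n)$-element index set; it is positive whenever $\ell-k\leq \ell-n$ and vanishes exactly when $\ell-k>\ell-n$, i.e.\ precisely when $n>k$. Therefore the zero eigenvectors in the complete set produced by Theorem \ref{egenvalueAAT,all}(i) are exactly those indexed by $v'\in V'_{\ell,n;B}$ with $k<n\leq \ell$.

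To finish, I would observe that these particular ${\bf z}_{v'}^{\ell n}$ lie in $\ker(A_{\ell,k;B}^{\top}A_{\ell,k;B})$, are pairwise orthogonal by Theorem \ref{egenvalueAAT,all}(i) (hence linearly independent, as each is nonzero), and their cardinality equals the nullity: indeed, since the full family $\{{\bf z}_{v'}^{\ell n}\}_{v'\in \Gamma_B}$ is a complete eigenbasis, removing the vectors corresponding to nonzero eigenvalues leaves exactly a basis of the kernel. I do not foresee any genuine obstacle; the only point requiring a bit of care is the brief argument that $S_{\ell-k}(B(G_{v'}))\neq 0$ whenever $\ell-k\leq \ell-n$, which follows from $b_i\geq 2>0$ for all $i$.
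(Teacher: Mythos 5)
Your argument is correct and is exactly the route the paper takes: the corollary is stated there as an immediate consequence of Theorem \ref{egenvalueAAT,all}, whose proof of part (ii) already contains the key observation that $S_{\ell-k}(B(G_{v'}))=0$ if and only if $k<n\leq\ell$. Your write-up simply makes explicit the standard fact that for a diagonalizable matrix the kernel is spanned by the eigenvectors of eigenvalue zero, which is all that is needed.
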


Using Fact 3 in Section 5.6 of \cite{handbooklin} and Theorem \ref{egenvalueAAT,all}, the following corollary gives the reduced SVD of $A_{\ell, k;B}$.

\begin{cor}\label{SVD cor}
Let $r= {\rm rank} (A_{\ell, k;B}^{\top} A_{\ell, k;B})=|V'_{\ell,\leq k;B}|$, $\{u_1, u_2, \ldots, u_r\}= \{\frac{1}{S_{\ell-k}(B(G_{v'}))}{\bf x}_{v'}^{\ell k n}: v'\in V'_{\ell,\leq k;B}\}$ and
$\{v_1, v_2, \ldots, v_r\}=\{ {\bf z}_{v'}^{\ell n}: v' \in  V'_{\ell, \leq k; B}\}$. Then the reduced SVD of $A_{\ell, k;B}$ is given by $A_{\ell, k;B}=U\Sigma V$, where
the columns of $U$ are vectors $u_i$, ($1\leq i \leq r$) and the columns of $V$ are vectors $v_i$, ($1\leq i \leq r$) and $\Sigma$ is an $r\times r$ diagonal matrix, whose diagonal entries are the corresponding nonzero
 eigenvalues of the matrix $A_{\ell, k;B}A^{\top}_{\ell, k;B}$.
\end{cor}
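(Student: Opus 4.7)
The plan is to read off the reduced singular value decomposition directly from the orthogonal eigensystems already supplied by Theorem \ref{egenvalueAAT,all}, using Lemma \ref{SVD AAT} as the bridge between the eigenstructures of $A_{\ell,k;B} A_{\ell,k;B}^{\top}$ and $A_{\ell,k;B}^{\top} A_{\ell,k;B}$ and the singular values of $A_{\ell,k;B}$. Nothing essentially new has to be established about $A_{\ell,k;B}$; the corollary is an assembly statement.

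First, I would pin down the rank. By Theorem \ref{egenvalueAAT,all}(ii), the vectors $\{{\bf x}_{v'}^{\ell k n}: v' \in V'_{\ell,\leq k;B}\}$ are pairwise orthogonal eigenvectors for the nonzero eigenvalues of $A_{\ell,k;B} A_{\ell,k;B}^{\top}$, and they are nonzero by Proposition \ref{p-nrm-x}. Hence they form a basis for the image of $A_{\ell,k;B} A_{\ell,k;B}^{\top}$, giving $\mathrm{rank}(A_{\ell,k;B} A_{\ell,k;B}^{\top}) = |V'_{\ell,\leq k;B}|$; since $\mathrm{rank}(A) = \mathrm{rank}(AA^{\top})$ as recorded in Section \ref{PreLinAlg}, this yields $r = |V'_{\ell,\leq k;B}|$. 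Next, Lemma \ref{SVD AAT}(i) identifies the nonzero singular values of $A_{\ell,k;B}$ as the square roots of the nonzero eigenvalues of $A_{\ell,k;B} A_{\ell,k;B}^{\top}$, i.e.\ as $\{\sqrt{S_{\ell-k}(B(G_{v'}))}: v' \in V'_{\ell,\leq k;B}\}$; arranging these in non-increasing order fixes the diagonal matrix $\Sigma$. Finally, Lemma \ref{SVD AAT}(ii) prescribes that the columns of $U$ and $V$ be eigenvectors of $A_{\ell,k;B} A_{\ell,k;B}^{\top}$ and $A_{\ell,k;B}^{\top} A_{\ell,k;B}$ respectively, and the vectors ${\bf x}_{v'}^{\ell k n}$ and ${\bf z}_{v'}^{\ell n}$ are the natural candidates, their orthogonality being part of Theorem \ref{egenvalueAAT,all}.

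The one place requiring actual computation is confirming that each ${\bf x}_{v'}$ is coupled to its companion ${\bf z}_{v'}$ through $A_{\ell,k;B}$ by the claimed singular value. Here I would invoke Proposition \ref{matIdThm}(i),(ii): the identities $A_{\ell,k;B}\,{\bf z}_{v'} = {\bf x}_{v'}$ and $A_{\ell,k;B}^{\top}{\bf x}_{v'} = S_{\ell-k}(B(G_{v'}))\,{\bf z}_{v'}$, combined with Proposition \ref{p-nrm-x} applied both to ${\bf x}_{v'}$ with parameter $k$ and to ${\bf z}_{v'} = {\bf x}_{v'}^{\ell,\ell,n}$, deliver $\|{\bf x}_{v'}\|^{2} = S_{\ell-k}(B(G_{v'}))\,\|{\bf z}_{v'}\|^{2}$. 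Dividing each vector by its norm therefore yields the coupling $A_{\ell,k;B}\,\bigl({\bf z}_{v'}/\|{\bf z}_{v'}\|\bigr) = \sqrt{S_{\ell-k}(B(G_{v'}))}\,\bigl({\bf x}_{v'}/\|{\bf x}_{v'}\|\bigr)$, from which the factorization $A_{\ell,k;B} = U\Sigma V^{\top}$ assembles columnwise. The only real obstacle is this scaling bookkeeping: keeping the two families' normalization constants aligned so that the scalar $\frac{1}{S_{\ell-k}(B(G_{v'}))}$ displayed in the corollary is correctly read as the factor absorbing $\|{\bf x}_{v'}\|/\|{\bf z}_{v'}\|$ against the singular value $\sqrt{S_{\ell-k}(B(G_{v'}))}$, so that the listed $u_i$, $v_i$ and the diagonal $\Sigma$ together reconstruct $A_{\ell,k;B}$.
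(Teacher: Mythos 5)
Your proposal is correct and follows essentially the same route as the paper, which justifies the corollary only by citing Lemma \ref{SVD AAT} together with Theorem \ref{egenvalueAAT,all}; you supply the details the paper leaves implicit, namely the rank count, the identification of the singular values, and the coupling $A_{\ell,k;B}\bigl({\bf z}_{v'}/\|{\bf z}_{v'}\|\bigr)=\sqrt{S_{\ell-k}(B(G_{v'}))}\,\bigl({\bf x}_{v'}/\|{\bf x}_{v'}\|\bigr)$ obtained from Proposition \ref{matIdThm}(i),(ii) and the norm identity $\|{\bf x}_{v'}\|^{2}=S_{\ell-k}(B(G_{v'}))\,\|{\bf z}_{v'}\|^{2}$. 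Your bookkeeping in fact shows that the factorization holds with the unit vectors ${\bf x}_{v'}/\|{\bf x}_{v'}\|$ and ${\bf z}_{v'}/\|{\bf z}_{v'}\|$ and with $\sqrt{S_{\ell-k}(B(G_{v'}))}$ on the diagonal, as the definition of reduced SVD in Section \ref{PreLinAlg} requires, whereas the scalings displayed in the corollary ($\frac{1}{S_{\ell-k}(B(G_{v'}))}{\bf x}_{v'}$, ${\bf z}_{v'}$, eigenvalues rather than their square roots) do not literally reproduce $A_{\ell,k;B}$ (for $k=\ell$ they give $\sum_{v'}{\bf z}_{v'}{\bf z}_{v'}^{\top}\neq I$), so the normalized version you derive is the one to retain.
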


    \begin{definition} \label{Upsilon} Given $B=(b_1,\ldots,b_{\ell})$, we define $\Upsilon_{\ell, k;B}$ as a matrix whose rows and columns are indexed by the elements of $V_{\ell, k;B}$ and $V'_{\ell ,\leq k;B}$ respectively and whose entries are given by $\Upsilon_{\ell, k;B}(w,v')=(-1)^{|w|-|w|_g}\nu_B(w,v')$. In other words, the columns of $\Upsilon_{\ell, k;B}$ are exactly the vectors ${\bf x}_{v'}$ for $v'\in V'_{\ell ,\leq k;B}$.
    Also we define the matrix $\Lambda$ by $\Lambda = {\rm diag}(S_{\ell-k}(B(G_{v'})))_{v'\in V'_{\ell,\leq k;B}}$.

\end{definition}

    \begin{remark} \label{eigenDecomp}
         Using Lemma {\rm \ref{W_up} (ii)} and Theorem {\rm \ref{egenvalueAAT,all}}, we obtain an orthonormal nonzero eigendecomposition for $A_{\ell, k;B} A_{\ell, k;B}^{\top}$.
    \end{remark}

\section{Bases for the null space and the row space of $A_{\ell,k;B}$}\label{rowspace}
In this section we give concrete bases for the null space and the row space of $A_{\ell,k;B}$.
The basis for the null space is obtained just as a corollary of the arguments of the Section \ref{eigen sec}.
The basis for the row space is obtained by a proper selection of some rows of $A_{\ell,k;B}$;
This gives a combinatorial interpretation for the previous formula about the rank of $A_{\ell,k;B}$;
The process of finding a basis for the row space of the matrix $A_{\ell, k;B}$
is similar to the one about incidence matrices presented in \cite{wilsoneuropean}.

 \begin{thm}\label{rowspacethm} Let $0\leq k\leq \ell$. Then
 \begin{itemize}
            \item[\rm(i)] The set $\{{\bf  z}_{v'}^{\ell n}: v' \in  V'_{\ell, n; B},\,
             k< n\leq \ell\}$, is a basis for the null-space of the matrix  $A_{\ell, k;B}$.
        \item[\rm(ii)] The matrix $A_{\ell,\leq k;B}$ has the same row space as $A_{\ell,k;B}$ and  $\rk(A_{\ell,k;B})=R_k(B)$.
        For $w\in V_{\ell,\leq k;B}$ denote by ${\bf r}_w$ the row of $A_{\ell,\leq k;B}$ indexed $w$. Then the set
$\{{\bf r}_w: w\in V'_{\ell,\leq k;B}\}$ is a basis for the row space of  $A_{\ell,k;B}$.
\end{itemize}
 \end{thm}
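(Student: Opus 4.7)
Part (i) follows essentially for free from what has already been proved. The key observation is the standard identity $\ker(A)=\ker(A^{\top}A)$ for any real matrix $A$: if $A^{\top}Ax=0$ then $\|Ax\|^2=x^{\top}A^{\top}Ax=0$, so $Ax=0$. Corollary \ref{nullAAt} has already identified the stated set as a basis for $\ker(A_{\ell,k;B}^{\top}A_{\ell,k;B})$, so it is also a basis for $\ker(A_{\ell,k;B})$. Equivalently, Proposition \ref{matIdThm}(ii) gives $A_{\ell,k;B}{\bf z}_{v'}^{\ell n}={\bf x}_{v'}^{\ell k n}$, and for $n>k$ every $w\in V_{\ell,k;B}$ satisfies $|\overline{G}_w|=k<n=|\overline{G}_{v'}|$, so $\overline{G}_{v'}\not\subseteq\overline{G}_w$, which forces $\nu_B(w,v')=0$ and hence ${\bf x}_{v'}^{\ell k n}=0$. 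This verifies directly that ${\bf z}_{v'}^{\ell n}\in\ker(A_{\ell,k;B})$ when $n>k$.

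For part (ii) I would proceed in three steps. First, for the row-space equality, I would take $0\le j<k$ and $v\in V_{\ell,j;B}$ and consider the sum of rows of $A_{\ell,k;B}$ indexed by those $v'\in V_{\ell,k;B}$ obtained from $v$ by selecting $k-j$ additional positions in $G_v$ and filling them with arbitrary values. A direct count shows each $u$ matching $v$ is captured by exactly $\binom{\ell-j}{k-j}$ such $v'$ (the extra positions can be chosen in that many ways, and their values are forced by $u$), so this sum equals $\binom{\ell-j}{k-j}$ times the $v$-row of $A_{\ell,j;B}$. Dividing by this positive integer places every row of $A_{\ell,j;B}$ into ${\rm row}(A_{\ell,k;B})$, yielding ${\rm row}(A_{\ell,\leq k;B})={\rm row}(A_{\ell,k;B})$. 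Second, for the rank I would use $\rk A_{\ell,k;B}=\rk(A_{\ell,k;B}^{\top}A_{\ell,k;B})$ together with Theorem \ref{egenvalueAAT,all}(i): the rank equals the number of nonzero eigenvalues of $A^{\top}A$, which is $|V'_{\ell,\leq k;B}|=R_k(B)$.

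The remaining, and hardest, step is the linear independence of $\{{\bf r}_w:w\in V'_{\ell,\leq k;B}\}$. My plan is to evaluate each row at a carefully chosen test vector: for $w\in V'_{\ell,\leq k;B}$ define $\hat w\in\Sigma_B$ by replacing every $g$ in $w$ by $0$. A short case analysis then shows ${\bf r}_{w'}(\hat w)=1$ iff $\overline{G}_{w'}\subseteq\overline{G}_w$ and $w,w'$ agree on $\overline{G}_{w'}$; the key point is that on a non-gap position of $w'$ the value $w'_i$ lies in $\{1,\ldots,b_i-1\}$ (since $w'\in\Gamma_B$), so $\hat w_i=w'_i$ forces $w_i=w'_i\neq g$. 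The binary relation $w\preceq' w'$ so defined is easily seen to be a partial order on $V'_{\ell,\leq k;B}$. Indexing rows and columns of the square matrix $M=[{\bf r}_{w'}(\hat w)]_{w,w'}$ by any linear extension of $\preceq'$ turns $M$ into an upper-triangular matrix with $1$'s on the diagonal, hence invertible. Linear independence of the ${\bf r}_w$ follows, and combined with the rank count they form a basis.

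The main obstacle is this last step: the trick of selecting the test points $\hat w$ and uncovering the hidden poset structure that triangulates the evaluation matrix. The use of $V'_{\ell,\leq k;B}$ rather than $V_{\ell,\leq k;B}$ is essential here, since the decoding $w\mapsto\hat w$ is only unambiguous — and the evaluation is only triangular — when the non-gap values of $w$ avoid $0$.
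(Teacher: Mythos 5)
Your proposal is correct, and for the crux of part (ii) it takes a genuinely different route from the paper's. The parts that coincide: your part (i) is essentially the paper's argument ($\ker A=\ker(A^{\top}A)$ combined with Corollary \ref{nullAAt}, or equivalently the direct check that ${\bf x}_{v'}^{\ell k n}=0$ for $n>k$); the rank computation via the eigenvalue count is the same; and your row-space equality (summing all extensions of a row of $A_{\ell,j;B}$ at once, each matching $u$ being counted $\binom{\ell-j}{k-j}$ times) is only a cosmetic variant of the paper's single-position recursion ${\bf r}_{gs}=\sum_{i=0}^{b_1-1}{\bf r}_{is}$ iterated over $j=1,\ldots,k$. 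The real divergence is in proving that $\{{\bf r}_w: w\in V'_{\ell,\leq k;B}\}$ is a basis. The paper proves \emph{spanning}: it expresses every row indexed by a word containing a $0$ as a signed combination of rows indexed by words of $\Gamma_B$, via the identity ${\bf r}_{0^m s}=\sum_{t\in\Gamma_{B'}}(-1)^{m-|t|_g}\,{\bf r}_{ts}$ (left as an induction on $m$), and then gets linear independence for free from the rank count. You prove \emph{linear independence}: evaluating the rows at the test points $\hat w$ (gaps replaced by $0$) yields a $(0,1)$ matrix supported on a partial order with $1$'s on the diagonal, hence triangular under a linear extension and invertible; spanning then comes for free from the same rank count. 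The two arguments are dual and both are complete. The paper's version buys an explicit change-of-basis formula writing arbitrary rows in terms of the chosen basis rows, while yours avoids the inductive identity and is closer in spirit to the classical triangularity arguments for inclusion matrices (the Wilson-style technique the paper itself alludes to). Your observation that $w'\in\Gamma_B$ forces $w'_i\neq 0$ on non-gap positions --- which is exactly what makes the evaluation at $\hat w$ unambiguous and the matrix triangular --- is the right leverage point, and it explains, just as the paper's argument does, why the basis must be drawn from $V'_{\ell,\leq k;B}$ rather than $V_{\ell,\leq k;B}$.
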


\begin{proof}
{\begin{itemize}
            \item[\rm(i)] First we observe that
            the vector ${\bf x}_{v'}^{\ell k n}$ is the zero vector if and only if $k< n \leq \ell$.
            Therefore, we have $ \{ {\bf z}_{v'}^{\ell n}: v' \in  V'_{\ell, n; B},\,
             k< n\leq \ell\} \subseteq {\rm null} (A_{\ell, k;B})$. Now, considering
             $ {\rm null} (A_{\ell, k;B}) \subseteq {\rm null} (A_{\ell, k;B}^{\top}A_{\ell, k;B})$ and using Corollary \ref{nullAAt}, we conclude that
              $ {\rm null} (A_{\ell, k;B}) = \{ {\bf z}_{v'}^{\ell n}: v' \in  V'_{\ell, n; B},
             \, k< n\leq \ell\}$, as required.
\item[\rm(ii)] Let $1\leq i \leq k$ and $v\in V_{\ell, i-1; B}$. Let $v_1=g$, say $v=g s$ with $s\in V_{\ell, i-1; B''},$ where $B''=(b_2,\ldots,b_{\ell})$. If $v=gs$ is matchable with some $u\in \Sigma_B$, then $u$ is matchable with exactly one of the words $0\, s,1\, s,\ldots,(b_1-1)\, s$; Otherwise $u$ is matchable with none of these words. Thus we have
\begin{equation}\label{rwrv}
{\bf r}_{gs}=\sum_{i=0}^{b_1-1} {\bf r}_{is}
\end{equation}
which gives the row indexed $v\in V_{\ell, i-1; B}$ as the summation of some rows indexed by elements $v'\in V_{\ell, i; B}$. Similar argument is true if we consider any gapped position of a word $v\in V_{\ell, i-1; B}$ (instead of its first position), thus we conclude ${\rm row}(A_{\ell, i-1;B}) \subseteq  {\rm row}(A_{\ell, i;B})$ for each $i$, $1\leq i \leq k$. Hence, $A_{\ell,\leq k;B}$ has the same row space as $A_{\ell,k;B}$.
In Theorem \ref{egenvalueAAT,all} (ii), we obtained a complete set of non-zero eigenvalues of $A_{\ell, k;B} A_{\ell, k;B}^{\top}$ which yields ${\rm rank} (A_{\ell, k;B} A_{\ell, k;B}^{\top})=|V'_{\ell,\leq k;B}|=R_k(B)$.
Since ${\rm rank} (A_{\ell, k;B} A_{\ell, k;B}^{\top})$ equals  ${\rm rank} (A_{\ell, k;B})$, we obtain ${\rm rank}(A_{\ell, k;B})=R_k(B)$. To provide a combinatorial interpretation of this, we first rewrite the equation (\ref{rwrv}) in the form ${\bf r}_{0s}={\bf r}_{gs}- \sum_{i=1}^{b_1-1} {\bf r}_{is}$ and then generalize this as below
\begin{equation}\label{r0mg}
{\bf r}_{0^m s}=\sum_{t\in \Gamma_{B'}} (-1)^{m-|t|_g}{\bf r}_{ts},
\end{equation}
where $0<m\leq \ell$, $B'=(b_{1},\ldots,b_m)$,$\,\, B''=(b_{m+1},\ldots,b_\ell)$ and $s\in \Delta_{B''}$. The proof of (\ref{r0mg}) is by induction on $m$ and left to the reader. When $s \in \Gamma_{B''}$, equation (\ref{r0mg}) gives the row indexed by the word $v=0^m s\in V_{\ell,\leq k;B}\setminus V'_{\ell,\leq k;B}$ as a linear combination of rows indexed by some words $v'\in V'_{\ell,\leq k;B}$. By a proper permutation of the $0$ positions of $v$, it is observed that the last statement is true for any word $v \in V_{\ell,\leq k;B}\setminus V'_{\ell,\leq k;B}$ regardless of its positions of zeros;  Hence, the set $\{{\bf r}_w: w\in V'_{\ell,\leq k;B}\}$ is a generator for the row space of $A_{\ell,k;B}$. Now, using ${\rm rank}(A_{\ell,k;B})=R_k(B)$, we conclude that
 $\{{\bf r}_w: w\in V'_{\ell,\leq k;B}\}$ is a basis for the row space of $A_{\ell,k;B}$.
\end{itemize}}
\end{proof}

\section{Computing the entries of the matrices $W_{\ell, k;B}$ and $H_{\ell, k;B}$}\label{w sec}
In this section we give a concrete description of the entries of matrices $W_{\ell, k;B}$ and $H_{\ell, k;B}$,
using the orthonormal nonzero eigendecomposition of the matrices $A_{\ell, k;B}A_{\ell, k;B}^{\top}$.

\begin{thm}\label{W-entries}
Let $u \in \Sigma_B$, $v \in V_{\ell,k;B}$. Moreover, with notation of Definition \ref{defPQ}, let $P=P(u,v)$ and $Q=Q(u,v)$. Then the entry $W_{\ell,k;B}(u,v)$, the Moore-Penrose pseudo-inverse of $A_{\ell, k;B}$, is given as below
\begin{equation}
\label{w-ent-f}
W_{\ell, k;B}(u,v)=\frac{1}{\displaystyle \prod_{i \in \overline{G}_v} b_i}\,\, \sum_{G,\, G_v \subseteq G \subseteq [\ell]}\frac{\displaystyle (-1)^{|Q\setminus G|}\prod_{i \in P\setminus G}(b_i-1)}{S_{\ell-k}(B(G))}
\end{equation}
\end{thm}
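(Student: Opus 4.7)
The plan is to apply Lemma~\ref{W_up}(ii) to $A=A_{\ell,k;B}$ with the matrix $\Upsilon:=\Upsilon_{\ell,k;B}$ of Definition~\ref{Upsilon}. By Theorem~\ref{egenvalueAAT,all}(ii), the columns of $\Upsilon$, namely $\{\mathbf{x}_{v'}^{\ell k n}:v'\in V'_{\ell,\leq k;B}\}$, form a complete set of mutually orthogonal eigenvectors of $A_{\ell,k;B}A_{\ell,k;B}^{\top}$ corresponding to the nonzero eigenvalues $\lambda_{v'}=S_{\ell-k}(B(G_{v'}))$. Lemma~\ref{W_up}(ii) therefore expresses $W_{\ell,k;B}$ as $A_{\ell,k;B}^{\top}\Upsilon D\Upsilon^{\top}$ with $D=\mathrm{diag}\bigl(\tfrac{1}{\|\mathbf{x}_{v'}\|^2\lambda_{v'}}\bigr)_{v'\in V'_{\ell,\leq k;B}}$.

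To extract the $(u,v)$-entry, I first observe that the $v'$-th column of $A_{\ell,k;B}^{\top}\Upsilon$ equals $A_{\ell,k;B}^{\top}\mathbf{x}_{v'}=\lambda_{v'}\mathbf{z}_{v'}$ by Proposition~\ref{matIdThm}(i); the factor $\lambda_{v'}$ then cancels against the $\lambda_{v'}$ sitting in $D$. Combining this with $\mathbf{z}_{v'}(u)=\nu_B(u,v')$ and $\Upsilon(v,v')=(-1)^{\ell-k}\nu_B(v,v')$, and substituting the norm formula of Proposition~\ref{p-nrm-x}, yields the intermediate identity
\[
W_{\ell,k;B}(u,v)=(-1)^{\ell-k}\sum_{v'\in V'_{\ell,\leq k;B}}\frac{\nu_B(v,v')\,\nu_B(u,v')}{S_{\ell-k}(B(G_{v'}))\,\prod_{i\in\overline{G}_{v'}}(v'_i+{v'_i}^2)\,\prod_{i\in G_{v'}}b_i}.
\]
Next, I regroup the sum according to $G=G_{v'}$. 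The contribution from each $G$ factors as $\tfrac{1}{S_{\ell-k}(B(G))\prod_{i\in G}b_i}$ times exactly the left-hand side of Proposition~\ref{NuIdentity}(ii); substituting the right-hand side of that identity, collapsing the sign $(-1)^{\ell-k}(-1)^{|G_v|}=1$ (since $|G_v|=\ell-k$), and observing $\prod_{i\in G_v}b_i/\bigl(\prod_{i\in G}b_i\prod_{i\in\overline{G}}b_i\bigr)=1/\prod_{i\in\overline{G}_v}b_i$, delivers the claimed formula.

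The main obstacle is bookkeeping rather than insight: I must verify that restricting the outer sum to $v'\in V'_{\ell,\leq k;B}$ is consistent with the constraint $|G|\geq\ell-k$ implicit in $G_v\subseteq G\subseteq[\ell]$, that terms with $G_v\not\subseteq G$ vanish automatically (since $\nu_B(v,v')=0$ whenever $G_v\not\subseteq G_{v'}$) so that Proposition~\ref{NuIdentity}(ii) applies to every surviving $G$, and that the sign and product simplifications in the last step are carried out correctly. Once these points are checked, the derivation reduces to a direct chain of substitutions using the identities assembled in Sections~\ref{nu sec} and~\ref{eigen sec}.
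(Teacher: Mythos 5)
Your proposal is correct and follows essentially the same route as the paper's proof: both start from $W=A^{\top}\Upsilon D\Upsilon^{\top}$ via Lemma~\ref{W_up}(ii), arrive at the same intermediate sum over $v'\in V'_{\ell,\leq k;B}$ after substituting the norm formula of Proposition~\ref{p-nrm-x}, and then regroup by $G=G_{v'}$ and apply Proposition~\ref{NuIdentity}(ii). The only cosmetic difference is that you invoke the packaged identity $A_{\ell,k;B}^{\top}\mathbf{x}_{v'}=\lambda_{v'}\mathbf{z}_{v'}$ of Proposition~\ref{matIdThm}(i), whereas the paper expands the entry as a sum over $y\in M_{\ell,k;B}(u)$ and applies Proposition~\ref{someNuId}(i) directly, which is the same computation.
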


\begin{proof}
{For $v' \in V'_{\ell ,\leq k}$, let $d_{v'}=\frac{1}{||{\bf x}_{v'}||^2 \lambda_{v'}}$. By Lemma \ref{W_up} and the definitions of $A_{\ell,k;B}$ and $\Upsilon_{\ell,k;B}$ we obtain
\begin{align*}
	W_{\ell, k;B}(u,v) &=\sum_{y\in M_{\ell, k;B}(u)}\sum_{v'\in V'_{\ell ,\leq k}} \nu_B(v,v')\, \nu_B(y,v') d_{v'}\\
&=\sum_{v'\in V'_{\ell ,\leq k}}\nu_B(v,v')\, d_{v'} \sum_{y\in M_{\ell, k;B}(u)}\nu_B(y,v')\\
&=\sum_{v'\in V'_{\ell ,\leq k}}(-1)^{\ell-k}\nu_B(v,v')\, d_{v'}\, S_{\ell-k}(B(G_{v'}))\,\nu_B(u,v')~~~\hbox{(by  (\ref{nuId1}))}
\end{align*}
Replacing $d_{v'}$ using Proposition \ref{p-nrm-x}, we obtaine
\begin{align*}
W_{\ell, k;B}(u,v)&=(-1)^{\ell-k}\sum_{v'\in V'_{\ell ,\leq k}} \frac{\nu_B(v,v')\, \nu_B(u,v')}{S_{\ell-k}(B(G_{v'})) {\displaystyle \prod_{i\in \overline{G}_{v'}} (v'_i+{v'_i}^2)\prod_{i\in G_{v'}}b_i}}\\
&=(-1)^{\ell-k}\sum_{G, G_v \subseteq G}\,\,\, \sum_{v'\in V'_{\ell ,\leq k},G_{v'}=G} \frac{\nu_B(v,v')\, \nu_B(u,v')}{S_{\ell-k}(B(G_{v'})) {\displaystyle \prod_{i\in \overline{G}_{v'}} (v'_i+{v'_i}^2)\prod_{i\in G_{v'}}b_i}}\\
&=(-1)^{\ell-k}\sum_{G,\, G_v \subseteq G \subseteq [\ell]}\,\frac{1}{S_{\ell-k}(B(G)) {\displaystyle  \prod_{i\in G}b_i} }  \,\, \sum_{v'\in \Gamma_B\, ,G_{v'}=G} \frac{\nu_B(v,v')\,\nu_B(u,v')}{ {\displaystyle \prod_{i\in \overline{G}_{v'}} (v'_i+{v'_i}^2)}}\\
&=\frac{1}{\displaystyle \prod_{i \in \overline{G}_v} b_i}\,\, \sum_{G,\, G_v \subseteq G \subseteq [\ell]}\frac{\displaystyle (-1)^{|Q\setminus G|}\prod_{i \in P\setminus G}(b_i-1)}{S_{\ell-k}(B(G))} ~~~\hbox{(by (\ref{fracNuId}))}~~~
\end{align*}
 }
\end{proof}

\begin{thm}\label{G-entries} The sum of entries of any rows (columns) of the matrix $H_{\ell,k;B}:=W_{\ell,k;B}A_{\ell,k;B}$ equals $1$. Furthermore, for any $u,w \in \Sigma_B$, the entry $H_{\ell,k;B}(u,w)$ is given as below, where by using Definition \ref{defPQ}, $P=P(u,w)$ and $Q=Q(u,w)$.
\begin{equation}
\label{g-ent-f}
H_{\ell, k;B}(u,w)=\frac{1}{\displaystyle \prod_{i=1}^{\ell} b_i}\,\, \sum_{G \subseteq [\ell],\,\,\ell-k\leq |G|}{\displaystyle (-1)^{|Q\setminus G|}\prod_{i \in P\setminus G}(b_i-1)}
\end{equation}

\end{thm}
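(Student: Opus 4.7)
\begin{prooff}{}
The proof splits naturally into two parts. For the row/column sum statement, I would first observe that $H_{\ell,k;B} = W_{\ell,k;B}A_{\ell,k;B} = A_{\ell,k;B}^{+}A_{\ell,k;B}$ is symmetric by the fourth defining condition of the Moore-Penrose pseudo-inverse, so it suffices to prove that every row sums to $1$, i.e.\ $H_{\ell,k;B}\mathbf{j} = \mathbf{j}$. Looking at the all-ones vector as $\mathbf{z}_{g^{\ell}}^{\ell,0}$ (indeed $\nu_B(u,g^{\ell}) = \prod_i \nu_i(u_i,g) = 1$ for every $u\in\Sigma_B$), Theorem~\ref{egenvalueAAT,all}(i) identifies $\mathbf{j}$ as an eigenvector of $A_{\ell,k;B}^{\top}A_{\ell,k;B}$ with nonzero eigenvalue $S_{\ell-k}(B)$. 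Lemma~\ref{W_nzed_A} then yields $H_{\ell,k;B}\mathbf{j}=W_{\ell,k;B}A_{\ell,k;B}\mathbf{j}=\mathbf{j}$, and symmetry of $H_{\ell,k;B}$ gives the column statement for free.

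For the entrywise formula, the plan is a direct computation starting from $H_{\ell,k;B}(u,w)=\sum_{v\in M_{\ell,k;B}(w)} W_{\ell,k;B}(u,v)$ (since $A_{\ell,k;B}(v,w)=1$ exactly when $v\sim w$) combined with the explicit expression~(\ref{w-ent-f}) from Theorem~\ref{W-entries}. The crucial observation is that as $v$ ranges over $M_{\ell,k;B}(w)$, the set $T:=\overline{G}_v$ ranges over $\binom{[\ell]}{k}$ and the coordinates of $v$ on $T$ are forced to agree with those of $w$; consequently, writing $P=P(u,w)$ and $Q=Q(u,w)$, one has $P(u,v)=P\cap T$ and $Q(u,v)=Q\cap T$.

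The next step is to interchange the order of summation in the resulting double sum, indexing the outer sum by $G\subseteq[\ell]$ with $|G|\ge \ell-k$ and the inner sum by $T\supseteq \overline{G}$ with $|T|=k$. The key simplification is that the condition $T\supseteq \overline{G}$ forces $(Q\cap T)\setminus G = Q\cap\overline{G} = Q\setminus G$ and similarly for $P$, so the $(-1)^{|Q\setminus G|}\prod_{i\in P\setminus G}(b_i-1)$ factor pulls entirely outside the inner sum. What remains inside is $\sum_{T\supseteq \overline{G},\,|T|=k}\frac{1}{\prod_{i\in T} b_i}$, which I would factor as $\frac{1}{\prod_{i\in\overline{G}} b_i}\sum_{S\subseteq G,\,|S|=|G|+k-\ell}\prod_{i\in S}\frac{1}{b_i}$ and then recognize as an elementary symmetric polynomial: multiplying and dividing by $\prod_{i\in G}b_i$ converts the inner sum into $\frac{S_{\ell-k}(B(G))}{\prod_{i=1}^{\ell}b_i}$.

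At this stage the factor $S_{\ell-k}(B(G))$ obtained from the inner summation cancels precisely the denominator appearing in~(\ref{w-ent-f}), leaving exactly the right-hand side of~(\ref{g-ent-f}). The main obstacle is the bookkeeping in the summation swap—keeping straight how $P(u,v)$, $Q(u,v)$, and $G_v$ depend on the choice of $v\in M_{\ell,k;B}(w)$ under the containment $G\supseteq G_v$—but once the index change is executed cleanly the inner sum is a one-line symmetric function identity and the cancellation with $S_{\ell-k}(B(G))$ in the denominator is immediate.
\end{prooff}
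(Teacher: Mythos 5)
Your proposal is correct and follows essentially the same route as the paper: both arguments identify $\mathbf{j}$ with $\mathbf{z}_{g^{\ell}}$ and invoke Lemma~\ref{W_nzed_A} for the row sums, then expand $H_{\ell,k;B}(u,w)=\sum_{v\sim w}W_{\ell,k;B}(u,v)$ via Theorem~\ref{W-entries}, use $P(u,v)\setminus G=P(u,w)\setminus G$ and $Q(u,v)\setminus G=Q(u,w)\setminus G$, interchange the sums over $v$ and $G$, and recognize the inner sum as $S_{\ell-k}(B(G))$ (up to the factor $\prod_{i=1}^{\ell}b_i$), which cancels the denominator. The only cosmetic difference is that the paper moves the factor $\prod_{i\in G_v}b_i$ into the numerator before interchanging, whereas you perform the equivalent manipulation inside the inner sum over $T=\overline{G}_v$.
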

\begin{proof}
{
To compute the sum of entries of each row (column) of the matrix $H$, we observe that ${\bf z}_{g g \cdots g}^{\ell k 0}={\bf j}$. Now, using Proposition \ref{matIdThm} (iv) and Lemma \ref{W_nzed_A}, we have $H{\bf j}={\bf j}$, as desired.

To calculate the entry $H_{\ell,k;B}(u,w)$ of $H$, firstly note that if $v\sim w$ then for any subset $G_v\subseteq G \subseteq [\ell]$ we have
$P(u,v)\setminus G=P(u,w)\setminus G$ and $Q(u,v)\setminus G=Q(u,w)\setminus G$.
Secondly, by $H_{\ell,k;B}=W_{\ell,k;B}A_{\ell,k;B}$ we have
\begin{equation*}
H_{\ell,k;B}(u,w)=\sum_{v\in V_{\ell,k;B},\,\,v\sim w} W_{\ell,k;B}(u,v)
\end{equation*}

Thirdly, the result of Theorem \ref{W-entries}, is rewritten as
\begin{equation*}
W_{\ell,k:B}(u,v)=\frac{1}{\displaystyle \prod_{i=1}^{\ell} b_i}\,\, \sum_{G,\, G_v \subseteq G \subseteq [\ell]}\frac{\displaystyle (-1)^{|Q(u,v)\setminus G|}{\displaystyle \prod_{i \in G_v} b_i}\prod_{i \in P(u,v)\setminus G}(b_i-1)}{S_{\ell-k}(B(G))}
\end{equation*}
Thus by the two last formulas we obtain
\begin{align*}
H_{\ell,k;B}(u,w)&=\frac{1}{\displaystyle \prod_{i=1}^{\ell}b_i}\sum_{v\in V_{\ell,k;B},\,\,v\sim w}\,\,\,\, \sum_{G,\, G_v \subseteq G \subseteq [\ell]}\frac{\displaystyle (-1)^{|Q(u,v)\setminus G|}{\displaystyle \prod_{i \in G_v} b_i}\prod_{i \in P(u,v)\setminus G}(b_i-1)}{S_{\ell-k}(B(G))}
\\
&={\displaystyle \frac{1}{\prod_{i=1}^{\ell}b_i}}\,\sum_{G\subseteq [\ell],\,\ell-k\leq |G|}\,\,\,\, \sum_{v\in V_{\ell,k;B},\,v\sim w,G_v\subseteq G}\frac{\displaystyle (-1)^{|Q(u,v)\setminus G|}{\displaystyle \prod_{i \in G_v} b_i}\prod_{i \in P(u,v)\setminus G}(b_i-1)}{S_{\ell-k}(B(G))}\\
&={\displaystyle \frac{1}{\prod_{i=1}^{\ell}b_i}}\,\sum_{G\subseteq [\ell],\,\ell-k\leq |G|}\,\,\,\, \sum_{v\in V_{\ell,k;B},\,v\sim w,G_v\subseteq G}\frac{\displaystyle (-1)^{|Q(u,w)\setminus G|}{\displaystyle \prod_{i \in G_v} b_i}\prod_{i \in P(u,w)\setminus G}(b_i-1)}{S_{\ell-k}(B(G))}\\
&={\displaystyle \frac{1}{\prod_{i=1}^{\ell}b_i}}\,\sum_{G\subseteq [\ell],\,\ell-k\leq |G|}\frac{\displaystyle (-1)^{|Q(u,w)\setminus G|}\prod_{i \in P(u,w)\setminus G}(b_i-1)}{S_{\ell-k}(B(G))} \sum_{v\in V_{\ell,k;B},\,v\sim w,G_v\subseteq G}{\displaystyle \,\,\,\, \prod_{i \in G_v} b_i}\\
\end{align*}
Now, considering the fact that the inner summation is exactly $S_{\ell-k}(B(G))$, we obtain (\ref{g-ent-f}).
}
\end{proof}

{\bf Acknowledgement.} M. Mohammad-Noori would like to thank University of Tehran for supporting him during his sabbatical leave at IPM. The researches of N. Ghareghani and M. Mohammad-Noori were in part supported by grants from IPM (No. 94050016 and No. 95050129).

\end{document}